\newtheorem{theorem}{Theorem}
\newtheorem{definition}{Definition}
\newtheorem{algorithm}{Algorithm}
\newtheorem{lemma}{Lemma}
\begin{document}
%
\title{An Information-Spectrum Approach to the Capacity Region of the Interference Channel}
%
%
%

\author{Authors}

\author{Xiao~Ma,~\IEEEmembership{Member,~IEEE,}
        Lei~Lin,
        Chulong~Liang,
        Xiujie~Huang,~\IEEEmembership{Member,~IEEE,}
        and~Baoming~Bai,~\IEEEmembership{Member,~IEEE}
\thanks{This paper was presented in part at the 2012 IEEE
International Symposium on Information Theory. This work was supported by the 973 Program~(No.2012CB316100) and the NSFC~(No.61172082).}
\thanks{X.~Ma, and C.~Liang are with the Department of Electronic and Communication Engineering, Sun Yat-sen University, Guangzhou 510006,
Guangdong, China~(email: maxiao@mail.sysu.edu.cn).}
\thanks{L.~Lin is with the Department of Applied Mathematics, Sun Yat-sen University, Guangzhou 510006,
Guangdong, China~(email: linlei2@mail2.sysu.edu.cn).}
\thanks{X.~Huang is with the Department of Electrical Engineering, University of Hawaii, Honolulu 96822, HI, USA.}
\thanks{B.~Bai is with the State Lab. of ISN, Xidian University, Xi'an 710071, Shaanxi, China.}}





\maketitle\thispagestyle{empty}

\begin{abstract}
In this paper, we present a general formula for the capacity region of a general interference channel with two pairs of users. The formula shows that the capacity region is the union of a family of rectangles, where each rectangle is determined by a pair of spectral inf-mutual information rates. Although the presented formula is usually difficult to compute, it provides us useful insights into the interference channels. In particular, when the inputs are discrete ergodic Markov processes and the channel is stationary memoryless, the formula can be evaluated by BCJR algorithm. Also the formula suggests us that the simplest inner bounds~(obtained by treating the interference as noise) could be improved by taking into account the structure of the interference processes. This is verified numerically by computing the mutual information rates for Gaussian interference channels with embedded convolutional codes. Moreover, we present a coding scheme to approach the theoretical achievable rate pairs. Numerical results show that decoding gain can be achieved by considering the structure of the interference.

\end{abstract}

\begin{IEEEkeywords}
Capacity region, interference channel, information spectrum, limit superior/inferior in probability, spectral inf-mutual information rate.
\end{IEEEkeywords}

%
\IEEEpeerreviewmaketitle

\section{Introduction}
The interference channel~(IC) is a communication model with multiple pairs of senders and receivers, in which each sender has an independent message intended only for the corresponding receiver. This model was first mentioned by Shannon~\cite{Shannon61} in 1961 and further studied by Ahlswede~\cite{Ahlswede74} in 1974. A basic problem for the IC is to determine the capacity region, which is currently one of long-standing open problems in information theory. Only in some special cases, the capacity regions are known, such as strong interference channels, very strong interference channels and deterministic interference channels~\cite{Carleial75,Han81,ElGamal82,Costa87}. For a general IC, various inner and outer bounds of the capacity region have been obtained. In 2004, Kramer derived two outer bounds on the capacity region of the general Gaussian interference channel~(GIFC)~\cite{Kramer04}. The first bound for a general GIFC unifies and improves the outer bounds of Sato~\cite{Sato77} and Carleial~\cite{Carleial83}. The second bound follows directly from the outer bounds of Sato~\cite{Sato78} and Costa~\cite{Costa85}, which is derived by considering a degraded GIFC and is even better than the first one for certain weak GIFCs. The best inner bound~(the so-called HK region)
 is that proposed by Han and Kobayashi~\cite{Han81}, which has been simplified by Chong {\em et~al.} and Kramer in their independent works~\cite{Chong08} and~\cite{Kramer06}. In recent years, Etkin, Tse and Wang~\cite{Etkin08} showed by introducing the idea of approximation that  HK region~\cite{Han81} is within one bit of the capacity region for the GIFC.

In~\cite{Huang2011}, the authors proposed a new computational model for the two-user GIFC, in which one pair of users~(called {\em primary users}) are constrained to use a fixed encoder and the other pair of users~(called {\em secondary users}) are allowed to optimize their code. The maximum rate at which the secondary users can communicate reliably without degrading the performance of the primary users is called the {\em accessible capacity} of the secondary users. Since the structure of the interference from the primary link has been taken into account in the computation, the accessible capacity is usually higher than the maximum rate when treating the interference as noise, as is consistent with the spirit of~\cite{FrancoisArticle2011}\cite{Moshksar11}. However, to compute the accessible capacity~\cite{Huang2011}, the primary link is allowed to have a non-neglected error probability. This makes the model unattractive when the capacity region is considered. For this reason, we will relax the fixed-code constraints on the primary users in this paper. In other words, we will compute a pair of transmission rates at which both links can be asymptotically error-free.

In this paper, we consider a more general interference channel which is characterized by a sequence of transition probabilities.
By the use of the information spectrum approach~\cite{Han93}\cite{Han03},  we present a general formula for the capacity region of the general
interference channel with two pairs of users. The formula shows that the capacity region is the union of a family of rectangles, in which each rectangle
is determined by a pair of spectral inf-mutual information rates. The information spectrum approach, which is based on the
\emph{limit superior/inferior in probability} of a sequence of random variables, has been proved to be powerful in characterizing the
limit behavior of a general source/channel.
For instance, in~\cite{Han93} and~\cite{Verdu94}, Han and Verd\'{u} proved
that the minimum compression rate for a general source equals its \emph{spectral sup-entropy rate} and
the maximum transmission rate for a general point-to-point channel equals its \emph{spectral inf-mutual information rate} with an optimized input process. Also the information spectrum approach can be used to derive the capacity region
of a general multiple access channel~\cite{Han98}. For more applications of the information spectrum approach, see~\cite{Han03} and the references therein.

The rest of the paper is structured as follows. Sec.~\ref{Definition_section} introduces the definition of a general IC and the concept of the spectral inf-mutual information rate. In Sec.~\ref{Capacity_section}, a general formula for the capacity region of the general IC is proposed; while, in Sec.~\ref{Sec:algFortheorem}, a trellis-based algorithm is presented to compute the
pair of rates for a stationary memoryless IC with discrete ergodic Markov sources.
In Sec.~\ref{GIFC_section}, numerical results are presented for a GIFC with binary-phase shift-keying~(BPSK) modulation. Sec.~\ref{Sec:decAnddetAlg} provides the detection and decoding algorithms for channels with structured interference. Sec.~\ref{Conclusions} concludes this paper.

In this paper, a random variable is denoted by an upper-case letter, say $X$, while its realization
and sample space are denoted by $x$ and $\mathcal{X}$, respectively. The sequence of random variables with length $n$ are
denoted by $X^n$, while its realization is denoted by ${\bf x} \in \mathcal{X}^n$ or $x^n \in \mathcal{X}^n$. We use $P_X(x)$ to denote the probability mass function~(pmf) of $X$ if it is discrete or the probability density function~(pdf) of $X$ if it is continuous.

\begin{figure}
  \centering
  \includegraphics[width=11.0cm]{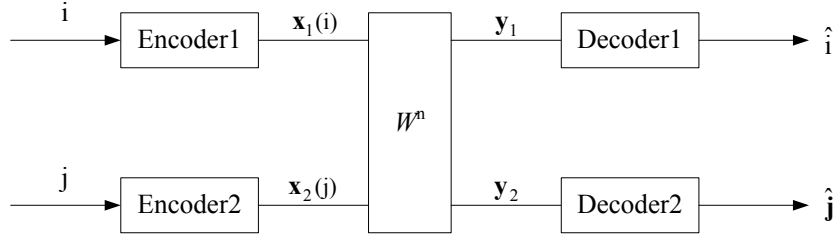}\\
  \caption{General interference channel ${\bf W}$.}\label{general IFC_Fig}
\end{figure}

\section{Basic Definitions And Problem Statement}\label{Definition_section}
\subsection{General IC}
Let $\mathcal{X}_1$, $\mathcal{X}_2$ be two finite input alphabets and $\mathcal{Y}_1$, $\mathcal{Y}_2$ be two finite output alphabets. A general interference channel ${\bf W}$~(see Fig.~\ref{general IFC_Fig}) is characterized by a sequence ${\bf W} = \{W^n(\cdot,\cdot|\cdot,\cdot)\}_{n=1}^{\infty}$, where $W^n: \mathcal{X}_1^n \times \mathcal{X}_2^n \rightarrow \mathcal{Y}_1^n \times \mathcal{Y}_2^n$ is a probability transition matrix. That is, for all $n$,
\begin{eqnarray*}
  W^n({\bf y}_1,{\bf y}_2|{\bf x}_1,{\bf x}_2)  &\geq&  0 \\
  \sum\limits_{{\bf y}_1 \in \mathcal{Y}_1^n,{\bf y}_2 \in \mathcal{Y}_2^n} W^n({\bf y}_1,{\bf y}_2|{\bf x}_1,{\bf x}_2) &=& 1.
\end{eqnarray*}
The marginal distributions $W_1^n,W_2^n$ of the $W^n$ are given by
\begin{eqnarray}
W_1^n({\bf y}_1|{\bf x}_1,{\bf x}_2) &=& \sum_{{\bf y}_2 \in \mathcal{Y}_2^n} W^n({\bf y}_1,{\bf y}_2|{\bf x}_1,{\bf x}_2),\\
W_2^n({\bf y}_2|{\bf x}_1,{\bf x}_2) &=& \sum_{{\bf y}_1 \in \mathcal{Y}_1^n} W^n({\bf y}_1,{\bf y}_2|{\bf x}_1,{\bf x}_2).
\end{eqnarray}
\begin{definition}
An $(n,M_n^{(1)},M_n^{(2)},\varepsilon_n^{(1)}, \varepsilon_n^{(2)})$ code for the interference channel ${\bf W}$ consists of the following essentials:
\begin{itemize}
  \item [a)]message sets:
\begin{eqnarray}
\mathcal{M}_n^{(1)} = \{1,2,\ldots,M_n^{(1)}\},\,\,\,&{\rm for\,\,\,Sender~1}\nonumber\\
\mathcal{M}_n^{(2)} = \{1,2,\ldots,M_n^{(2)}\},\,\,\,&{\rm for\,\,\,Sender~2}\nonumber
\end{eqnarray}

  \item [b)]sets of codewords:
\begin{equation}\nonumber
\begin{array}{ll}
\{{\bf x}_1(1),{\bf x}_1(2),\ldots,{\bf x}_1(M_n^{(1)})\} \subseteq \mathcal{X}_1^n,\,\,\,&{\rm for\,\,\,Encoder~1}\\
\{{\bf x}_2(1),{\bf x}_2(2),\ldots,{\bf x}_2(M_n^{(2)})\} \subseteq \mathcal{X}_2^n,\,\,\,&{\rm for\,\,\,Encoder~2}
\end{array}
\end{equation}

For Sender~1 to transmit message $i$, Encoder~1 outputs the codeword ${\bf x}_1(i)$. Similarly, for Sender~2 to transmit message $j$, Encoder~2 outputs the codeword ${\bf x}_2(j)$.

  \item [c)]collections of decoding sets:
\begin{equation}\nonumber
\begin{array}{ll}
\mathcal{B}_1 = \{\mathcal{B}_{1i} \subseteq \mathcal{Y}_1^n\}_{i = 1,...,M_n^{(1)}},\,\,&{\rm for\,\,\,Decoder~1}\\
\mathcal{B}_2 = \{\mathcal{B}_{2j} \subseteq \mathcal{Y}_2^n\}_{j = 1,...,M_n^{(2)}},\,\,&{\rm for\,\,\,Decoder~2}
\end{array}
\end{equation}
where $\mathcal{Y}_1^n = \bigcup\limits_{i = 1}^{M_n^{(1)}}\mathcal{B}_{1i},\,\,\mathcal{B}_{1i} \bigcap \mathcal{B}_{1i'} = \emptyset$ for $i \neq i'$ and $\mathcal{Y}_2^n = \bigcup\limits_{j = 1}^{M_n^{(2)}}\mathcal{B}_{2j},\,\,\mathcal{B}_{2j} \bigcap \mathcal{B}_{2j'} = \emptyset$ for $j \neq j'$.
That is, $\mathcal{B}_1$ and $\mathcal{B}_2$ are the disjoint partitions of $\mathcal{Y}_1^n$ and $\mathcal{Y}_2^n$ determined in advance, respectively.
After receiving ${\bf y}_1$, Decoder~1 outputs $\hat i$ whenever ${\bf y}_1 \in \mathcal{B}_{1\hat{i}}$. Similarly, after receiving ${\bf y}_2$, Decoder~2 outputs $\hat j$ whenever ${\bf y}_2 \in \mathcal{B}_{2\hat{j}}$.

  \item [d)]probabilities of decoding errors:
\begin{equation}\nonumber
\begin{array}{ll}
      & \varepsilon_n^{(1)} = \frac{1}{M_n^{(1)}M_n^{(2)}}\sum\limits_{i=1}^{M_n^{(1)}} \sum\limits_{j=1}^{M_n^{(2)}}W_1^n(\mathcal{B}^c_{1i}|{\bf x}_1(i), {\bf x}_2(j)),\\
      & \varepsilon_n^{(2)} = \frac{1}{M_n^{(1)}M_n^{(2)}}\sum\limits_{i=1}^{M_n^{(1)}} \sum\limits_{j=1}^{M_n^{(2)}}W_2^n(\mathcal{B}^c_{2j}|{\bf x}_1(i), {\bf x}_2(j)),
\end{array}
\end{equation}
where $``c"$ denotes the complement of a set. Here we have assumed that each message of $i \in \mathcal{M}_n^{(1)}$ and $j \in \mathcal{M}_n^{(2)}$ is produced independently with uniform distribution.
\end{itemize}
\end{definition}
{\bf Remark}: The optimal decoding to minimize the probability of errors is defining the decoding sets $\mathcal{B}_{1i}$ and $\mathcal{B}_{2j}$ according to the the maximum likelihood decoding~\cite{Etkin10}. That is, the two receivers choose, respectively,
$$\hat{i} = \arg\max_i {\rm Pr}\{{\bf y}_1|{\bf x}_1(i)\}$$
and
$$\hat{j} = \arg\max_j {\rm Pr}\{{\bf y}_2|{\bf x}_2(j)\}$$
as the estimates of the transmitted messages.

\begin{definition}
A rate pair $(R_1,R_2)$ is achievable if there exists a sequence of $(n,M_n^{(1)},M_n^{(2)},\varepsilon_n^{(1)}, \varepsilon_n^{(2)})$ codes such that
\begin{eqnarray}
\lim_{n \rightarrow \infty} \varepsilon_n^{(1)} = 0 &{\rm and}& \lim_{n \rightarrow \infty} \varepsilon_n^{(2)} = 0 ,\nonumber\\
\liminf_{n \rightarrow \infty} \frac{\log M_n^{(1)}}{n} \geq R_1 &{\rm and}& \liminf_{n \rightarrow \infty} \frac{\log M_n^{(2)}}{n} \geq R_2.\nonumber
\end{eqnarray}
\end{definition}

\begin{definition}
The set of all achievable rates is called the capacity region of the interference channel ${\bf W}$, which is denoted by $\mathcal{C}({\bf W})$.
\end{definition}

\subsection{Preliminaries of Information-Spectrum Approach}
The following notions can be found in~\cite{Han03}.
\begin{definition}[liminf in probability]
For a sequence of random variables $\{Z^n\}_{n=1}^{\infty}$,
$$p\textrm{-}\liminf_{n \rightarrow \infty}Z^n \overset{\triangle}{=} \sup\{\beta|\lim_{n \rightarrow \infty}{\rm Pr}\{Z^n < \beta\} = 0\}.$$
\end{definition}

\begin{definition}
If two random variables sequences ${\bf X}_1 = \{{X}_1^n\}_{n=1}^{\infty}$ and ${\bf X}_2 = \{{X}_2^n\}_{n=1}^{\infty}$ satisfy that
\begin{equation}\label{P_independ}
P_{{X}_1^n{X}_2^n}({\bf x}_1,{\bf x}_2) = P_{{X}_1^n}({\bf x}_1) P_{{X}_2^n}
({\bf x}_2)
\end{equation}
for all ${\bf x}_1 \in \mathcal{X}_1^n$, ${\bf x}_2 \in \mathcal{X}_2^n$ and $n$, they are called independent and denoted by ${\bf X}_1 \bot {\bf X}_2$.
\end{definition}

Similar to~\cite{Han93}, we have
\begin{definition}
Let $S_I \stackrel{\triangle}{=} \{({\bf X}_1,{\bf X}_2)| {\bf X}_1 \bot {\bf X}_2\}$. Given an $({\bf X}_1,{\bf X}_2) \in S_I$, for the interference channel ${\bf W}$, we define the {\em spectral inf-mutual information rate} by
\begin{eqnarray}
\underline{I}({\bf X}_1;{\bf Y}_1) &\equiv& p\textrm{-}\liminf_{n \rightarrow \infty}\frac{1}{n} \log \frac{P_{Y_1^n|X_1^n}({Y}_1^n | {X}_1^n)}{P_{{Y}_1^n}({Y}_1^n)},\\
\underline{I}({\bf X}_2;{\bf Y}_2) &\equiv& p\textrm{-}\liminf_{n \rightarrow \infty}\frac{1}{n} \log \frac{P_{Y_2^n|X_2^n}({Y}_2^n | {X}_2^n)}{P_{{Y}_2^n}({Y}_2^n)},
\end{eqnarray}
where
\begin{eqnarray}
P_{Y_1^n|X_1^n}({\bf y}_1|{\bf x}_1) &=& \sum_{{\bf x}_2 ,{\bf y}_2} P_{X_2^n}({\bf x}_2)W^n({\bf y}_1,{\bf y}_2|{\bf x}_1,{\bf x}_2),\label{Py1x1}\\
P_{Y_2^n|X_2^n}({\bf y}_2|{\bf x}_2) &=& \sum_{{\bf x}_1 ,{\bf y}_1} P_{X_1^n}({\bf x}_1)W^n({\bf y}_1,{\bf y}_2|{\bf x}_1,{\bf x}_2).\label{Py2x2}
\end{eqnarray}
\end{definition}

\section{The Capacity Region of General IC}

In this section, we derive a formula for the capacity region $\mathcal{C}({\bf W})$ of the general IC.

\subsection{The Main Theorem}\label{Capacity_section}

\begin{theorem}\label{Capacity_Theorem}
The capacity region $\mathcal{C}({\bf W})$ of the interference channel ${\bf W}$ is given by
\begin{equation}
\mathcal{C}({\bf W}) = \bigcup_{({\bf X}_1,{\bf X}_2) \in S_I} \mathcal{R}_{\bf W}({\bf X}_1,{\bf X}_2),
\end{equation}
where $\mathcal{R}_{\bf W}({\bf X}_1,{\bf X}_2)$ is defined as the collection of all $(R_1,R_2)$ satisfying that
\begin{eqnarray}
0 \leq R_1 &\leq& \underline{I}({\bf X}_1;{\bf Y}_1)\label{Capacity_R1},\\
0 \leq R_2 &\leq& \underline{I}({\bf X}_2;{\bf Y}_2)\label{Capacity_R2}.
\end{eqnarray}
\end{theorem}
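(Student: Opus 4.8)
The plan is to establish the two inclusions separately: a direct (achievability) part showing $\bigcup_{({\bf X}_1,{\bf X}_2)\in S_I}\mathcal{R}_{\bf W}({\bf X}_1,{\bf X}_2)\subseteq\mathcal{C}({\bf W})$, and a converse showing the reverse. The single observation that drives both directions is that, because the two message sets are independent and each decoder is only required to recover its own message, the error analysis of each link decouples once the inputs are independent: averaging the channel over the independent input of the other link produces exactly the marginal point-to-point channel $P_{Y_\ell^n|X_\ell^n}$ of (\ref{Py1x1})--(\ref{Py2x2}). Thus the whole problem reduces to two instances of the general point-to-point channel-coding theorem of Verd\'u and Han, coupled only through the shared channel, and the coupling turns out to be benign.

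For achievability I would fix $({\bf X}_1,{\bf X}_2)\in S_I$ and a pair $(R_1,R_2)$ with $R_1<\underline{I}({\bf X}_1;{\bf Y}_1)$, $R_2<\underline{I}({\bf X}_2;{\bf Y}_2)$, and generate the two codebooks independently, drawing the $M_n^{(1)}=\lceil e^{nR_1}\rceil$ codewords of Encoder~1 i.i.d.\ from $P_{X_1^n}$ and, independently, the $M_n^{(2)}=\lceil e^{nR_2}\rceil$ codewords of Encoder~2 i.i.d.\ from $P_{X_2^n}$. Decoder~$\ell$ is the threshold decoder that outputs $\hat\imath$ when $\frac{1}{n}\log\frac{P_{Y_\ell^n|X_\ell^n}({\bf y}_\ell|{\bf x}_\ell(\hat\imath))}{P_{Y_\ell^n}({\bf y}_\ell)}$ exceeds $R_\ell+\gamma$ uniquely, so that $\mathcal{B}_{1i}$ depends only on codebook~1. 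The key step is to take $\mathbb{E}[\varepsilon_n^{(1)}]$ over both random codebooks: since $\mathcal{B}_{1i}$ is a function of codebook~1 alone and the interfering codeword ${\bf x}_2(j)$ is i.i.d.\ $P_{X_2^n}$, averaging replaces $W_1^n(\cdot|{\bf x}_1(i),{\bf x}_2(j))$ by $P_{Y_1^n|X_1^n}(\cdot|{\bf x}_1(i))$ via (\ref{Py1x1}). Hence $\mathbb{E}[\varepsilon_n^{(1)}]$ equals the average error of a random code for the point-to-point channel $P_{Y_1^n|X_1^n}$, and Feinstein's lemma (the direct part of the Verd\'u--Han theorem) bounds it by $\Pr\{\frac{1}{n}\log\frac{P_{Y_1^n|X_1^n}}{P_{Y_1^n}}\le R_1+\gamma\}+e^{-n\gamma}\to 0$ whenever $R_1+\gamma<\underline{I}({\bf X}_1;{\bf Y}_1)$; symmetrically for $\mathbb{E}[\varepsilon_n^{(2)}]$. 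Since $\mathbb{E}[\varepsilon_n^{(1)}+\varepsilon_n^{(2)}]\to0$, some deterministic pair of codebooks drives both error probabilities to zero, so $(R_1,R_2)$ is achievable.

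For the converse I would take any achievable $(R_1,R_2)$ with a code sequence satisfying $\varepsilon_n^{(1)},\varepsilon_n^{(2)}\to0$ and $\liminf_n\frac1n\log M_n^{(\ell)}\ge R_\ell$, and read off the induced input processes ${\bf X}_\ell=\{X_\ell^n\}$, where $X_\ell^n$ is uniform over the codewords of Encoder~$\ell$. Because the messages are independent and each codeword depends only on its own message, $X_1^n\bot X_2^n$ for every $n$, so $({\bf X}_1,{\bf X}_2)\in S_I$ and the output $Y_\ell^n$ is governed by $P_{Y_\ell^n|X_\ell^n}$. The Verd\'u--Han converse lemma applied to link~1 gives, for every $\gamma>0$,
\[
\varepsilon_n^{(1)}\ \ge\ \Pr\!\left\{\tfrac1n\log\tfrac{P_{Y_1^n|X_1^n}(Y_1^n|X_1^n)}{P_{Y_1^n}(Y_1^n)}\le \tfrac1n\log M_n^{(1)}-\gamma\right\}-e^{-n\gamma}.
\]
If $R_1>\underline{I}({\bf X}_1;{\bf Y}_1)$, pick $\gamma$ small with $R_1-2\gamma>\underline{I}({\bf X}_1;{\bf Y}_1)$; using $\frac1n\log M_n^{(1)}\ge R_1-\gamma$ for large $n$, the probability on the right dominates $\Pr\{Z^n\le R_1-2\gamma\}$, which does not vanish along a subsequence by the definition of $p\textrm{-}\liminf$, forcing $\varepsilon_n^{(1)}$ away from zero and contradicting $\varepsilon_n^{(1)}\to0$. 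Hence $R_1\le\underline{I}({\bf X}_1;{\bf Y}_1)$ and, symmetrically, $R_2\le\underline{I}({\bf X}_2;{\bf Y}_2)$, so $(R_1,R_2)\in\mathcal{R}_{\bf W}({\bf X}_1,{\bf X}_2)$.

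I expect the genuinely delicate point to be the achievability coupling: one must justify that a single pair of codebooks simultaneously serves both links and that each decoder may legitimately ignore the other codebook, treating the interference as folded into the averaged channel $P_{Y_\ell^n|X_\ell^n}$. The averaging identity is what makes this work, and it rests squarely on the independence ${\bf X}_1\bot{\bf X}_2$ and on the error criterion in which receiver~$\ell$ decodes only message~$\ell$; absent either, the decoupling breaks and the problem becomes the genuinely hard general IC. A secondary technical matter is closing the region along the boundary $R_\ell=\underline{I}({\bf X}_\ell;{\bf Y}_\ell)$, which I would handle by achieving every $(R_1-\delta,R_2-\delta)$ and invoking the closedness of $\mathcal{C}({\bf W})$.
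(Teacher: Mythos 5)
Your proposal is correct and follows essentially the same route as the paper: the paper's Lemma~1 is exactly your random-coding/threshold-decoding bound, proved by the same averaging over the independent interfering codebook that reduces each link to its marginal channel $P_{Y_\ell^n|X_\ell^n}$, and its Lemma~2 is the Verd\'u--Han converse applied to the uniform-over-codewords input processes, with the same conclusion that $R_\ell - 2\gamma \leq \underline{I}({\bf X}_\ell;{\bf Y}_\ell)$ for arbitrary $\gamma$. The only differences are presentational --- the paper re-derives the point-to-point bounds inside the IC setting rather than citing them as black boxes, and it silently passes over the boundary/closure issue (its construction also only yields rates $R_\ell - 2\gamma$) that you explicitly flag and patch.
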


To prove Theorem~\ref{Capacity_Theorem}, we need the following lemmas.

\begin{lemma}\label{Error_lemma_1}
Let
$$({\bf X}_1 = \{{X}_1^n\}_{n=1}^{\infty},{\bf X}_2 = \{{X}_2^n\}_{n=1}^{\infty})$$
be any channel input such that $({\bf X}_1,{\bf X}_2) \in S_I$. The corresponding output via an interference channel ${\bf W} = \{W^n\}$ is denoted by $({\bf Y}_1=\{{Y}_1^n\}_{n=1}^{\infty},{\bf Y}_2 = \{{Y}_2^n\}_{n=1}^{\infty})$. Then, for any fixed $M_n^{(1)}$ and $M_n^{(2)}$, there exists an $(n,M_n^{(1)},M_n^{(2)},\varepsilon_n^{(1)}, \varepsilon_n^{(2)})$ code satisfying that
\begin{equation}\label{Lamma_1_inequa}
\varepsilon_n^{(1)} + \varepsilon_n^{(2)} \leq {\rm Pr}\{T^c_n(1)\} + {\rm Pr}\{T^c_n(2)\} + 2e^{-n\gamma},
\end{equation}
where
\begin{equation}
\begin{array}{l}
T_n(1) = \{({\bf x}_1,{\bf y}_1)| \frac{1}{n} \log \frac{P_{Y_1^n|X_1^n}({\bf y}_1 | {\bf x}_1)}{P_{{Y}_1^n}({\bf y}_1)} > \frac{1}{n} \log M_n^{(1)} + \gamma\},\nonumber\\
T_n(2) = \{({\bf x}_2,{\bf y}_2)| \frac{1}{n} \log \frac{P_{Y_2^n|X_2^n}({\bf y}_2 | {\bf x}_2)}{P_{{Y}_2^n}({\bf y}_2)} > \frac{1}{n} \log M_n^{(2)} + \gamma\}\nonumber
\end{array}
\end{equation}
and $\gamma > 0$ is an arbitrarily small number.
\end{lemma}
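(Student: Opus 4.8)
The plan is to prove this achievability bound by a random coding argument with threshold (Feinstein-type) decoding, exploiting the observation that once the interference is averaged out, the two receivers can be handled as two essentially independent point-to-point problems. First I would generate the two codebooks at random, drawing the codewords ${\bf x}_1(1),\ldots,{\bf x}_1(M_n^{(1)})$ independently from $P_{X_1^n}$ and the codewords ${\bf x}_2(1),\ldots,{\bf x}_2(M_n^{(2)})$ independently from $P_{X_2^n}$. For the decoders I would use the threshold rule suggested by the sets $T_n(1)$ and $T_n(2)$: Decoder~1 outputs $i$ when ${\bf x}_1(i)$ is the unique codeword with $({\bf x}_1(i),{\bf y}_1)\in T_n(1)$, resolving ties and failures by a fixed convention (say, declaring index~$1$), which only inflates the error count and makes the $\mathcal{B}_{1i}$ a genuine partition as the definition demands; Decoder~2 is defined symmetrically from $T_n(2)$.

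The key reduction is that, because the messages are uniform and ${\bf X}_1 \bot {\bf X}_2$, averaging Receiver~1's output over the random user-2 codeword together with the uniformly chosen interfering message reproduces precisely the marginal transition law $P_{Y_1^n|X_1^n}$ of~(\ref{Py1x1}); symmetrically for Receiver~2 and~(\ref{Py2x2}). Hence, under the full randomization, the pair $({\bf X}_1(I),{\bf Y}_1)$ carries the joint law $P_{X_1^n Y_1^n}$, and I can bound $\mathbb{E}[\varepsilon_n^{(1)}]$ exactly as in the standard information-spectrum achievability argument. The event that the true codeword fails the threshold contributes $\Pr\{T_n^c(1)\}$, while the event that a competing codeword passes it is controlled by the defining inequality of $T_n(1)$: rewriting the likelihood ratio by Bayes' rule shows that every ${\bf x}_1$ with $({\bf x}_1,{\bf y}_1)\in T_n(1)$ has $P_{X_1^n}$-mass below $(M_n^{(1)} e^{n\gamma})^{-1} P_{X_1^n|Y_1^n}({\bf x}_1|{\bf y}_1)$, so the probability that a single independent codeword lands in the threshold set is at most $(M_n^{(1)} e^{n\gamma})^{-1}$; a union bound over the at most $M_n^{(1)}$ incorrect messages yields $e^{-n\gamma}$. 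Therefore $\mathbb{E}[\varepsilon_n^{(1)}] \leq \Pr\{T_n^c(1)\} + e^{-n\gamma}$, and symmetrically $\mathbb{E}[\varepsilon_n^{(2)}] \leq \Pr\{T_n^c(2)\} + e^{-n\gamma}$.

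Finally I would add the two bounds and invoke the probabilistic method: since the expectation of $\varepsilon_n^{(1)}+\varepsilon_n^{(2)}$ over the random choice of the codebook pair does not exceed $\Pr\{T_n^c(1)\} + \Pr\{T_n^c(2)\} + 2e^{-n\gamma}$, at least one realization of the codebook pair---a single deterministic $(n,M_n^{(1)},M_n^{(2)},\varepsilon_n^{(1)},\varepsilon_n^{(2)})$ code---attains total error no larger than this expectation, which is exactly~(\ref{Lamma_1_inequa}). The step I expect to be most delicate is justifying this reduction to the marginal channels: one must verify that averaging over the interfering codebook and its uniform message genuinely converts the physical law $W_1^n(\cdot|{\bf x}_1,{\bf x}_2)$ into $P_{Y_1^n|X_1^n}$, and that carrying out this averaging inside the expectation of the error probability is legitimate, so that the two receivers' analyses decouple even though they share the same randomly drawn codebook pair. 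Taking the expectation of the \emph{sum}, rather than treating the two links separately, is precisely what guarantees the existence of one code serving both links at once.
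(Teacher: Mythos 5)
Your proposal is correct and follows essentially the same route as the paper's proof: random codebooks drawn from $P_{X_1^n}$ and $P_{X_2^n}$, threshold decoding on $T_n(1)$ and $T_n(2)$, a union bound combined with the change-of-measure inequality built into the threshold sets to bound the competing-codeword events by $e^{-n\gamma}$, and the probabilistic method applied to the expected \emph{sum} of the two error probabilities to extract a single code serving both links. The only cosmetic differences are your Bayes-rule rewriting of the threshold inequality (the paper applies it directly to bound $P_{Y_1^n}({\bf y}_1)$ by $P_{Y_1^n|X_1^n}({\bf y}_1|{\bf x}_1)e^{-n\gamma}/M_n^{(1)}$) and your tie-breaking convention at the decoder, neither of which changes the argument.
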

\begin{proof}[Proof of Lemma~\ref{Error_lemma_1}]
The proof is similar to that of~\cite[Lemma~3]{Han93}.

{\bf Codebook generation}. Generate $M_n^{(1)}$ independent codewords ${\bf x}_1(1),...,{\bf x}_1(M_n^{(1)}) \in \mathcal{X}_1^n$ subject to the probability distribution $P_{X_1^n}$. Similarly, generate $M_n^{(2)}$ independent codewords ${\bf x}_2(1),...,{\bf x}_2(M_n^{(2)}) \in \mathcal{X}_2^n$ subject to the probability distribution $P_{X_2^n}$.

{\bf Encoding.} To send message $i$, Sender~1 sends the codeword ${\bf x}_1(i)$. Similarly, to send message $j$, Sender~2 sends ${\bf x}_2(j)$.

{\bf Decoding.} Receiver~1 chooses the $i$ such that $({\bf x}_1(i), {\bf y}_1) \in T_n(1)$ if such $i$ exists and is unique. Similarly, Receiver~2 chooses the $j$ such that $({\bf x}_2(j), {\bf y}_2) \in T_n(2)$ if such $j$ exists and is unique. Otherwise, an error is declared.

 {\bf Analysis of the error probability.} By the symmetry of the random code construction, we can assume that $(1,1)$ was sent. Define
 $$E_{1i} = \{({\bf x}_1(i), {\bf y}_1) \in T_n(1)\},\,\, E_{2j} = \{({\bf x}_2(j), {\bf y}_2) \in T_n(2)\}.$$
For Receiver~1, an error occurs if $({\bf x}_1(1), {\bf y}_1) \notin T_n(1)$ or $({\bf x}_1(i), {\bf y}_1) \in T_n(1)$ for some $i \neq 1$.
 Similarly, for Receiver~2, an error occurs if $({\bf x}_2(1), {\bf y}_2) \notin T_n(2)$ or $({\bf x}_2(j), {\bf y}_2) \in T_n(2)$ for some $j \neq 1$. So the ensemble average of the error probabilities of Decoder~1 and Decoder~2 can be upper-bounded as follows:
 \begin{equation}\nonumber
 \begin{array}{l}
 \overline{{\varepsilon}_n^{(1)}+{\varepsilon}_n^{(2)}} = \overline{\varepsilon_n^{(1)}} + \overline{\varepsilon_n^{(2)}}\\
 \leq {\rm Pr}\{E_{11}^c\} + {\rm Pr}\{\bigcup\limits_{i \neq 1} E_{1i}\} + {\rm Pr}\{E_{21}^c\} + {\rm Pr}\{\bigcup\limits_{j \neq 1} E_{2j}\}.
 \end{array}
 \end{equation}
It can be seen that
  \begin{equation}\nonumber
 \begin{array}{ll}
 &{\rm Pr}\{\bigcup\limits_{i \neq 1} E_{1i}\} \leq \sum\limits_{i \neq 1}{\rm Pr}\{E_{1i}\}\\
 &=\sum\limits_{i \neq 1} {\rm Pr}\{({\bf x}_1(i), {\bf y}_1) \in T_n(1)\}\\
&\stackrel{(a)}{=} \sum\limits_{i \neq 1} \sum\limits_{({\bf x}_1,{\bf y}_1) \in T_n(1)}P_{X_1^n}({\bf x}_1)P_{Y_1^n}({\bf y}_1)\\
&\stackrel{(b)}{\leq} \sum\limits_{i \neq 1} \sum\limits_{({\bf x}_1,{\bf y}_1) \in T_n(1)}P_{X_1^n}({\bf x}_1)P_{Y_1^n|X_1^n}({\bf y}_1|{\bf x}_1)\frac{e^{-n\gamma}}{M_n^{(1)}}\\
&\leq  \sum\limits_{i \neq 1}\frac{e^{-n\gamma}}{M_n^{(1)}} = (M_n^{(1)}- 1)\frac{e^{-n\gamma}}{M_n^{(1)}} \leq e^{-n\gamma},
 \end{array}
 \end{equation}
 where $(a)$ follows from the independence of ${\bf x}_1(i)~(i \neq 1)$ and ${\bf y}_1$ and $(b)$ follows from the definition of $T_n(1)$. Similarly, we obtain
 \begin{equation}
 {\rm Pr}\{\bigcup\limits_{j \neq 1} E_{2j}\} \leq e^{-n\gamma}.
 \end{equation}
 Combining all inequalities above, we can see that there must exist at least one $(n,M_n^{(1)},M_n^{(2)},\varepsilon_n^{(1)}, \varepsilon_n^{(2)})$ code satisfying~(\ref{Lamma_1_inequa}).
\end{proof}

\begin{lemma}\label{Error_lemma_2}
For all $n$, any $(n,M_n^{(1)},M_n^{(2)},\varepsilon_n^{(1)}, \varepsilon_n^{(2)})$ code satisfies that
\begin{equation}\label{leq}
\begin{array}{l}
\varepsilon_n^{(1)} \geq {\rm Pr}\{\frac{1}{n} \log \frac{P_{Y_1^n|X_1^n}({Y}_1^n | {X}_1^n)}{P_{{Y}_1^n}({Y}_1^n)} \leq \frac{1}{n} \log M_n^{(1)} - \gamma\} - e^{-n\gamma},\\
\varepsilon_n^{(2)} \geq {\rm Pr}\{\frac{1}{n} \log \frac{P_{Y_2^n|X_2^n}({Y}_2^n | {X}_2^n)}{P_{{Y}_2^n}({Y}_2^n)} \leq \frac{1}{n} \log M_n^{(2)} - \gamma\} - e^{-n\gamma},
\end{array}
\end{equation}
for every $\gamma > 0$, where $X_1^n~({\rm resp.,}\,X_2^n)$ places probability mass $1/M_n^{(1)}~({\rm resp.,}\,1/M_n^{(2)})$ on each codeword for Encoder 1~(resp., Encoder 2) and (\ref{P_independ}), (\ref{Py1x1}), (\ref{Py2x2}) hold.
\end{lemma}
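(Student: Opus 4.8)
The plan is to reduce the two-user converse to two independent point-to-point converses of the Verd\'{u}--Han type. The crucial first step is to notice that receiver~1's error probability is that of an induced point-to-point channel. Since each message is equiprobable, $X_2^n$ places mass $1/M_n^{(2)}$ on each ${\bf x}_2(j)$, so the effective channel~(\ref{Py1x1}) evaluated at a codeword reads
\[
P_{Y_1^n|X_1^n}({\bf y}_1|{\bf x}_1(i)) = \frac{1}{M_n^{(2)}}\sum_{j} W_1^n({\bf y}_1|{\bf x}_1(i),{\bf x}_2(j)).
\]
Summing this identity over ${\bf y}_1 \in {\cal B}_{1i}^c$ and averaging over $i$ recovers exactly the definition of $\varepsilon_n^{(1)}$. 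Hence $\varepsilon_n^{(1)}$ is literally the average decoding error probability of a point-to-point code with codebook $\{{\bf x}_1(i)\}$, decoding partition $\{{\cal B}_{1i}\}$, and channel $P_{Y_1^n|X_1^n}$; the interfering link has been averaged into the channel and the two receivers decouple.

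Second, I would run the standard information-spectrum converse on this induced code. Abbreviating the event on the right-hand side of~(\ref{leq}) by
\[
D_1 = \Big\{({\bf x}_1,{\bf y}_1):\ \frac{1}{n}\log\frac{P_{Y_1^n|X_1^n}({\bf y}_1|{\bf x}_1)}{P_{Y_1^n}({\bf y}_1)} \leq \frac{1}{n}\log M_n^{(1)} - \gamma\Big\},
\]
I would split ${\rm Pr}\{D_1\}$ according to whether receiver~1 decodes correctly, i.e.\ whether the received ${\bf y}_1$ falls in the decoding set ${\cal B}_{1i}$ of the transmitted index $i$. The erroneous part is bounded above by $\varepsilon_n^{(1)}$, so it remains only to show that the ``correct-decision'' part is at most $e^{-n\gamma}$.

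Third, on $D_1$ the defining inequality exponentiates to the change-of-measure bound $P_{Y_1^n|X_1^n}({\bf y}_1|{\bf x}_1)\leq M_n^{(1)}e^{-n\gamma}P_{Y_1^n}({\bf y}_1)$. Inserting this into the correct-decision term, where the input is uniform with mass $1/M_n^{(1)}$, the factors $M_n^{(1)}$ cancel and there remains $e^{-n\gamma}\sum_i\sum_{{\bf y}_1\in{\cal B}_{1i}}P_{Y_1^n}({\bf y}_1)$. Because $\{{\cal B}_{1i}\}$ is a disjoint partition of ${\cal Y}_1^n$, the double sum collapses to $\sum_{{\bf y}_1}P_{Y_1^n}({\bf y}_1)=1$, giving the desired bound $e^{-n\gamma}$. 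Rearranging ${\rm Pr}\{D_1\}\leq\varepsilon_n^{(1)}+e^{-n\gamma}$ yields the first line of~(\ref{leq}); the second line follows verbatim after interchanging the roles of the two users.

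The only substantive step is the reduction in the first paragraph: one must verify that averaging receiver~2's codebook into $P_{Y_1^n|X_1^n}$ reproduces $\varepsilon_n^{(1)}$ exactly, so that receiver~1 genuinely faces a point-to-point channel whose error probability does not depend on how receiver~2 decodes. Once this decoupling is in place, the remainder is the routine Verd\'{u}--Han computation, and the only delicate points are exponentiating the $D_1$ inequality with the correct constant and invoking the partition property of $\{{\cal B}_{1i}\}$ to telescope the sum to unity.
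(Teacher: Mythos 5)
Your proof is correct and takes essentially the same route as the paper's: both arguments split the probability of the information-density event according to whether Receiver~1 decodes correctly, bound the erroneous part by $\varepsilon_n^{(1)}$, and bound the correct-decision part by $e^{-n\gamma}$ via the change-of-measure inequality combined with the fact that $\{\mathcal{B}_{1i}\}$ partitions $\mathcal{Y}_1^n$. The only cosmetic differences are that the paper first rewrites the event via Bayes' rule as $P_{X_1^n|Y_1^n}(X_1^n|Y_1^n)\leq e^{-n\gamma}$ rather than exponentiating the likelihood ratio directly, and that your explicit decoupling observation in the first paragraph is used implicitly in the paper when it identifies $\sum_i P_{X_1^nY_1^n}({\bf x}_1(i),\mathcal{B}_{1i}^c)$ with $\varepsilon_n^{(1)}$.
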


\begin{proof}[Proof of Lemma~\ref{Error_lemma_2}]
The proof is similar to that of~\cite[Lemma~4]{Han93}. By using the relation
$$ \frac{P_{Y_1^n|X_1^n}({\bf y}_1 | {\bf x}_1)}{P_{{Y}_1^n}({\bf y}_1)} =  \frac{P_{X_1^n|Y_1^n}({\bf x}_1 | {\bf y}_1)}{P_{{X}_1^n}({\bf x}_1)}$$
and noticing that $P_{{X}_1^n}({\bf x}_1) = \frac{1}{M_n^{(1)}}$, we can rewrite the first term on the right-hand side of the first inequality of (\ref{leq}) as
$${\rm Pr}\{P_{X_1^n|Y_1^n}(X_1^n|Y_1^n) \leq e^{-n\gamma}\}.$$
By setting
$$L_n = \{({\bf x}_1,{\bf y}_1)|P_{X_1^n|Y_1^n}({\bf x}_1 | {\bf y}_1) \leq  e^{-n\gamma}\},$$
the first inequality of (\ref{leq}) can be expressed as
\begin{equation}
{\rm Pr}\{L_n\} \leq \varepsilon_n^{(1)} +  e^{-n\gamma}.
\end{equation}
In order to prove this inequality, we set
$$\mathcal{A}_i = \{{\bf y}_1 \in \mathcal{Y}_1^n| P_{X_1^n|Y_1^n}({\bf x}_1(i) | {\bf y}_1) \leq  e^{-n\gamma}\}.$$
It can be seen that
\begin{equation}\nonumber
\begin{array}{l}
{\rm Pr}\{L_n\} = \sum\limits_{i = 1}^{M_n^{(1)}} P_{X_1^nY_1^n}({\bf x}_1(i),\mathcal{A}_i)\\
= \sum\limits_{i = 1}^{M_n^{(1)}} P_{X_1^nY_1^n}({\bf x}_1(i),\mathcal{A}_i\bigcap \mathcal{B}_{1i}) + \sum\limits_{i = 1}^{M_n^{(1)}} P_{X_1^nY_1^n}({\bf x}_1(i),\mathcal{A}_i\bigcap \mathcal{B}_{1i}^c)\\
\leq \sum\limits_{i = 1}^{M_n^{(1)}} P_{X_1^nY_1^n}({\bf x}_1(i),\mathcal{A}_i\bigcap \mathcal{B}_{1i}) + \sum\limits_{i = 1}^{M_n^{(1)}} P_{X_1^nY_1^n}({\bf x}_1(i),\mathcal{B}_{1i}^c)\\
= \sum\limits_{i = 1}^{M_n^{(1)}}\sum\limits_{{\bf y}_1 \in \mathcal{A}_i\bigcap \mathcal{B}_{1i}} P_{X_1^nY_1^n}({\bf x}_1(i),{\bf y}_1) + \varepsilon_n^{(1)} \\
=\sum\limits_{i = 1}^{M_n^{(1)}}\sum\limits_{{\bf y}_1 \in \mathcal{A}_i\bigcap \mathcal{B}_{1i}} P_{X_1^n|Y_1^n}({\bf x}_1(i)|{\bf y}_1)P_{Y_1^n}({\bf y}_1) + \varepsilon_n^{(1)} \\
\stackrel{(a)}{\leq} e^{-n\gamma}\sum\limits_{i = 1}^{M_n^{(1)}}\sum\limits_{{\bf y}_1 \in \mathcal{B}_{1i}}P_{Y_1^n}({\bf y}_1)+ \varepsilon_n^{(1)} \\
= e^{-n\gamma}P_{Y_1^n}(\bigcup\limits_{i = 1}^{M_n^{(1)}}\mathcal{B}_{1i})+ \varepsilon_n^{(1)} \leq e^{-n\gamma} + \varepsilon_n^{(1)},
\end{array}
\end{equation}
where $\mathcal{B}_{1i}$ is the decoding region corresponding to codeword ${\bf x}_1(i)$ and $(a)$ follows from the definition of $\mathcal{A}_i$. Therefore, the first inequality of (\ref{leq}) is proved. Similarly, we can obtain the second inequality of (\ref{leq}).
\end{proof}

Now we prove Theorem~\ref{Capacity_Theorem}.
\begin{proof}[Proof of Theorem~\ref{Capacity_Theorem}]

1)~To prove that an arbitrary rate pair $(R_1,R_2)$ satisfying (\ref{Capacity_R1}) and (\ref{Capacity_R2}) is achievable, we define
$$M_n^{(1)} = e^{n(R_1-2\gamma)}\,\,\,{\rm and}\,\,\,M_n^{(2)} = e^{n(R_2-2\gamma)}$$
for an arbitrarily small constant $\gamma > 0$. Lemma~\ref{Error_lemma_1} guarantees the existence of an $(n,M_n^{(1)},M_n^{(2)},\varepsilon_n^{(1)}, \varepsilon_n^{(2)})$ code satisfying

\begin{equation}\nonumber\label{Direct_leq1}
\begin{array}{ll}
\varepsilon_n^{(1)} + \varepsilon_n^{(2)} &\leq {\rm Pr}\{\frac{1}{n} \log \frac{P_{Y_1^n|X_1^n}({Y}_1^n | {X}_1^n)}{P_{{Y}_1^n}({Y}_1^n)} \leq  R_1 - \gamma\}\\
&\,\,\,\,\,\,+{\rm Pr}\{\frac{1}{n} \log \frac{P_{Y_2^n|X_2^n}({Y}_2^n | {X}_2^n)}{P_{{Y}_2^n}({Y}_2^n)} \leq  R_2 - \gamma\} + 2e^{-n\gamma}\\
 &\leq {\rm Pr}\{\frac{1}{n} \log \frac{P_{Y_1^n|X_1^n}({Y}_1^n | {X}_1^n)}{P_{{Y}_1^n}({Y}_1^n)} \leq  \underline{I}({\bf X}_1;{\bf Y}_1)- \gamma\} \\
 &\,\,\,\,\,\,+ {\rm Pr}\{\frac{1}{n} \log \frac{P_{Y_2^n|X_2^n}({Y}_2^n | {X}_2^n)}{P_{{Y}_2^n}({Y}_2^n)} \leq  \underline{I}({\bf X}_2;{\bf Y}_2)- \gamma\} + 2e^{-n\gamma}.
\end{array}
\end{equation}
From the definition of the spectral inf-mutual information rate, we have
$$\lim_{n \rightarrow \infty} \varepsilon_n^{(1)} = 0 \,\,\,{\rm and}\,\,\,\lim_{n \rightarrow \infty} \varepsilon_n^{(2)} = 0.$$

2)~Suppose that a rate pair $(R_1,R_2)$ is achievable. Then, for any constant $\gamma > 0$, there exists an $(n,M_n^{(1)},M_n^{(2)},\varepsilon_n^{(1)}, \varepsilon_n^{(2)})$ code satisfying
\begin{equation}
\frac{\log M_n^{(1)}}{n} \geq R_1 - \gamma \,\,\,{\rm and}\,\,\, \frac{\log M_n^{(2)}}{n} \geq R_2 - \gamma
\end{equation}
for all sufficiently large $n$ and
$$\lim_{n \rightarrow \infty} \varepsilon_n^{(1)} = 0 \,\,\,{\rm and}\,\,\,\lim_{n \rightarrow \infty} \varepsilon_n^{(2)} = 0.$$
From Lemma~\ref{Error_lemma_2}, we get
\begin{equation}\label{geqR}
\begin{array}{l}
\varepsilon_n^{(1)} \geq {\rm Pr}\{\frac{1}{n} \log \frac{P_{Y_1^n|X_1^n}({Y}_1^n | {X}_1^n)}{P_{{Y}_1^n}({Y}_1^n)} \leq R_1 - 2\gamma\} - e^{-n\gamma}\\
\varepsilon_n^{(2)} \geq {\rm Pr}\{\frac{1}{n} \log \frac{P_{Y_2^n|X_2^n}({Y}_2^n | {X}_2^n)}{P_{{Y}_2^n}({Y}_2^n)} \leq R_2 - 2\gamma\} - e^{-n\gamma}
\end{array}.
\end{equation}
Taking the limits as $n\rightarrow \infty$ on both sides, we have
\begin{equation}\label{geqR1}
\begin{array}{l}
\lim\limits_{n\rightarrow \infty}{\rm Pr}\{\frac{1}{n} \log \frac{P_{Y_1^n|X_1^n}({Y}_1^n | {X}_1^n)}{P_{{Y}_1^n}({Y}_1^n)} \leq R_1 - 2\gamma\} = 0\\
\lim\limits_{n\rightarrow \infty}{\rm Pr}\{\frac{1}{n} \log \frac{P_{Y_2^n|X_2^n}({Y}_2^n | {X}_2^n)}{P_{{Y}_2^n}({Y}_2^n)} \leq R_2 - 2\gamma\} = 0
\end{array}.
\end{equation}
From the definitions of $\underline{I}({\bf X}_1;{\bf Y}_1)$ and $\underline{I}({\bf X}_2;{\bf Y}_2)$, we can see that $R_1 - 2\gamma \leq \underline{I}({\bf X}_1;{\bf Y}_1)$ and $R_2 - 2\gamma \leq \underline{I}({\bf X}_2;{\bf Y}_2)$, which completes the proof since $\gamma$ is arbitrary.

%
\end{proof}

\subsection{The Algorithm to Compute Achievable Rate Pairs}\label{Sec:algFortheorem}
Theorem~\ref{Capacity_Theorem} provides a general formula for the capacity region of a general IC. However, it is usually difficult to compute the spectral inf-mutual information rates given in (\ref{Capacity_R1}) and (\ref{Capacity_R2}). In order to get insights into the interference channels, we make the following assumptions:
\begin{itemize}
  \item [1)] the channel is stationary and memoryless, that is, the transition probability of the channel can be written as
$$W^n({\bf y}_1,{\bf y}_2|{\bf x}_1,{\bf x}_2) = \prod_{i = 1}^nW(y_{1,i},y_{2,i}|x_{1,i},x_{2,i});$$
  \item [2)]sources are restricted to be stationary and ergodic discrete Markov processes.
\end{itemize}
With the above assumptions, the spectral inf-mutual information rates are reduced as
\begin{eqnarray}
\underline{I}({\bf X}_1;{\bf Y}_1) &=& \lim_{n \rightarrow \infty}\frac{1}{n}I({X}_1^n;{Y}_1^n),\\
\underline{I}({\bf X}_2;{\bf Y}_2) &=& \lim_{n \rightarrow \infty}\frac{1}{n}I({X}_2^n;{Y}_2^n),
\end{eqnarray}
which can be evaluated by the Monte Carlo method~\cite{Kavcic01}\cite{Arnold01}\cite{Pfister01} using BCJR algorithm~\cite{BCJR74} over a trellis. Actually, any stationary and ergodic discrete Markov source can be represented by a time-invariant trellis and (hence) is uniquely specified by a trellis section. A trellis section is composed of left~(or starting) states and right~(or ending) states, which are connected by branches in between. For example, Source ${\bf x}_1$ can be specified by a trellis $\mathcal{T}_1$ as follows.
\begin{itemize}
  \item Both the left and right states are selected from the set $\mathcal{S}_1 = \{0,1,...,|\mathcal{S}_1| - 1\}$;
  \item Each branch is represented by a three-tuple $b = (s_1^-(b),x_1(b), s_1^{+}(b))$, where $s_1^{-}(b)$ is the left state, $s_1^{+}(b)$
  is the right state, and the symbol $x_1(b) \in \mathcal{X}_1$ is the associated label.
  We also assume that a branch $b$ is uniquely determined by $s_1^-(b)$ and $x_1(b)$;

   \item At time $t = 0$, the source starts from state $s_{1,0}\in \mathcal{S}_1$. If at time $t-1~(t>0)$, the source is in the state $s_{1,t-1} \in \mathcal{S}_1$,
    then at time $t~(t>0)$, the source generates a symbol $x_{1,t} \in \mathcal{X}_1$ according to the conditional probability $P(x_{1,t}|s_{1,t-1})$ and goes into a state $s_{1,t}  \in \mathcal{S}_1$
    such that $(s_{1,t-1} ,x_{1,t},s_{1,t} )$ is a branch. Obviously, when the source runs from time $t=0$ to $t = n$, a sequence $x_{1,1},x_{1,2},...,x_{1,n}$ is generated. The Markov property says that
    $$P(x_{1,t}|x_{1,1},...,x_{1,{t-1}},s_{1,0} ) = P(x_{1,t}|s_{1,t-1} ).$$
    So the probability of a given sequence $x_{1,1},x_{1,2},...,x_{1,n}$ with the initial state $s_{1,0} $ can be factored as
    $$P(x_{1,1},x_{1,2},...,x_{1,n} | s_{1,0} ) = \prod\limits_{t = 1}^n  P(x_{1,t}|s_{1,t-1} ).$$


\end{itemize}

Similarly, we can represent ${\bf x}_2$ by a trellis $\mathcal{T}_2$ with the state set $\mathcal{S}_2 = \{0,1,...,|\mathcal{S}_2| - 1\}$. Each branch is denoted by $b = (s_2^-(b),x_2(b), s_2^{+}(b))$, where $s_2^{-}(b)$ is the left state, $s_2^{+}(b)$
  is the right state and the symbol $x_2(b) \in \mathcal{X}_2$ is  the associated label. Assume that source ${\bf x}_2$ starts from the state $s_{2,0} \in \mathcal{S}_2$. If at time $t-1~(t>0)$, the source is in the state $s_{2,t-1}\in \mathcal{S}_2$,
    then at time $t~(t>0)$, the source generates a symbol $x_{2,t} \in \mathcal{X}_2$ according to the conditional probability $P(x_{2,t}|s_{2,t-1})$ and goes into a state $s_{2,t} \in \mathcal{S}_2$
    such that $(s_{2,t-1},x_{2,t},s_{2,t})$ is a branch. The probability of a given sequence $x_{2,1},x_{2,2},...,x_{2,n}$ can be factored as
    $$P(x_{2,1},x_{2,2},...,x_{2,n} | s_{2,0}) = \prod\limits_{t = 1}^n P(x_{2,t}|s_{2,t-1}).$$
In what follows, we have fixed
the initial states as $s_{1,0} = 0$ and $s_{2,0} = 0$, and removed them from the equations for simplicity.

Next we focus on the evaluation of $\lim\limits_{n \rightarrow \infty}\frac{1}{n}I({X}_1^n;{Y}_1^n)$, while $\lim\limits_{n \rightarrow \infty}\frac{1}{n}I({X}_2^n;{Y}_2^n)$ can be estimated similarly. Specifically, we can express the limit as
\begin{equation}\label{eq:MItoEntropy}
  \lim_{n \rightarrow \infty}\frac{1}{n}I({X}_1^n;{Y}_1^n) = \lim_{n \rightarrow \infty}\frac{1}{n}H(Y_1^n) - \lim_{n \rightarrow \infty}\frac{1}{n}H(Y_1^n | X_1^n),
\end{equation}
where $\lim\limits_{n \rightarrow \infty}\frac{1}{n}H(Y_1^n)$ and $\lim\limits_{n \rightarrow \infty}\frac{1}{n}H(Y_1^n | X_1^n)$ can be estimated by similar methods\footnote{For continuous ${\bf y}_1$, the computations of  $\lim\limits_{n \rightarrow \infty}\frac{1}{n}h(Y_1^n)$ and $ \lim\limits_{n \rightarrow \infty}\frac{1}{n}h(Y_1^n | X_1^n)$ can be implemented by substituting pdf for pmf.}. As an example, we show how to compute $\lim\limits_{n \rightarrow \infty}\frac{1}{n}H(Y_1^n)$.
According to the Shannon-McMillan-Breiman theorem~\cite{Cover91}, it can be seen that, with probability 1,
$$\lim_{n \rightarrow \infty}-\frac{1}{n}\log P(y_{1}^n) = \lim_{n \rightarrow \infty}\frac{1}{n}H(Y_1^n),$$
 where $y_{1}^n$ stands for $(y_{1,1},y_{1,2},...,y_{1,n})$.
Then evaluating $\lim\limits_{n \rightarrow \infty}\frac{1}{n}H(Y_1^n)$ is
converted to computing
$$\lim_{n \rightarrow \infty}-\frac{1}{n}\log P(y_{1}^n) \approx -\frac{1}{n}\log\left(\prod_{t = 1}^n P(y_
{1,t}|y_{1}^{t-1})\right) = -\frac{1}{n}\sum_{t = 1}^n\log P(y_
{1,t}|y_{1}^{t-1})$$
for a sufficiently long typical sequence $y_{1}^n$. Here, the key is to compute the conditional probabilities $P(y_
{1,t}|y_{1}^{t-1})$ for all $t$. Since both ${\bf y}_1$ and ${\bf y}_2$ are hidden Markov sequences, this can be done by performing the BCJR algorithm over the following product trellis.

\begin{itemize}
  \item The product trellis has the state set $\mathcal{S} = \mathcal{S}_1 \times \mathcal{S}_2$, where ``$\times$" denotes Cartesian product.

  \item Each branch is represented by a four-tuple $b = (s^-(b),x_1(b),x_2(b),s^{+}(b))$, where $s^-(b) = (s_1^-(b),s_2^-(b))$ is the left state,
  $s^+(b) = (s_1^+(b),s_2^+(b))$ is the right state. Then $x_1(b) \in \mathcal{X}_1$ and $x_2(b) \in \mathcal{X}_2$ are the associated labels in branch~$b$ such
  that $(s_1^-(b),x_1(b),s_1^+(b))$ and $(s_2^-(b),x_2(b),s_2^+(b))$ are branches in $\mathcal{T}_1$ and $\mathcal{T}_2$, respectively.

  \item  At time $t = 0$, the sources start from state $s_0 = (s_{1,0},s_{2,0}) \in \mathcal{S}$. If at time $t-1~(t>0)$, the sources are in the state $s_{t-1} = (s_{1,t-1},s_{2,t-1}) \in \mathcal{S}$,
    then at time $t~(t>0)$, the sources generate symbols $(x_{1,t} \in \mathcal{X}_1, x_{2,t} \in \mathcal{X}_2)$ according to the conditional probability $P(x_{1,t}|s_{1,t-1})P(x_{2,t}|s_{2,t-1})$ and go into a state $s_t = (s_{1,t},s_{2,t}) \in \mathcal{S}_2$
    such that $(s_{t-1},x_{1,t},x_{2,t},s_t)$ is a branch.

\end{itemize}

Given the received sequence ${\bf y}_1$, we define
\begin{itemize}
  \item {\em Branch metrics}: To each branch $b_t = \{s_{t-1}, x_{1,t}, x_{2,t},s_t\}$, we assign a metric
  \begin{eqnarray}
    \rho(b_t) &\overset{\triangle}{=}& P(b_t|s_{t-1}) P(y_{1,t}|x_{1,t}x_{2,t})\\
     &=& P(x_{1,t}|s_{1,t-1})P(x_{2,t}|s_{2,t-1})P(y_{1,t}|x_{1,t}x_{2,t}),
  \end{eqnarray}
In the computation of $\lim\limits_{n \rightarrow \infty}\frac{1}{n}H(Y_1^n | X_1^n)$, the metric is replaced by $P(b_t|s_{t-1},x_{1,t}) P(y_{1,t}|x_{1,t}x_{2,t})$.
  \item {\em State transition probabilities}: The transition probability from $s_{t-1}$ to $s_t$ is defined as
  \begin{eqnarray}
    \gamma_t(s_{t-1}, s_t) &\overset{\triangle}{=}& P(s_t,y_{1,t}|s_{t-1}) \\
    &=& \sum_{b_t:s^-(b_t) = s_{t-1},s^+(b_t) = s_t}\rho(b_t).
  \end{eqnarray}
  \item {\em Forward recursion variables}: We define the \emph{a posteriori} probabilities
  \begin{equation}\label{posterior probabilities}
    \alpha_t(s_t) \overset{\triangle}{=} P(s_t|y_{1}^t), \,\,\,t= 0,1,...n.
  \end{equation}
  Then
  \begin{equation}\label{eq:Py_t}
    P(y_{1,t}|y_{1}^{t-1}) = \sum_{s_{t-1},s_t} \alpha(s_{t-1})\gamma_t(s_{t-1}, s_t),
  \end{equation}
where the values of $\alpha_t(s_t)$ can be computed recursively by
\begin{equation}\label{eq:alfa_t}
\alpha_t(s_t) = \frac{\sum_{s_{t-1}} \alpha_{t-1}(s_{t-1})\gamma_t(s_{t-1}, s_t)}{\sum_{s_{t-1},s_t} \alpha_{t-1}(s_{t-1})\gamma_t(s_{t-1}, s_t)}.
\end{equation}
\end{itemize}
In summary, the algorithm to estimate the entropy rate $\lim\limits_{n \rightarrow \infty}\frac{1}{n}H(Y_1^n)$ is described as follows.


\begin{algorithm}
\mbox{}\par
\begin{enumerate}
  \item {\bf Initializations:} Choose a sufficiently large number $n$. Set the initial state of the trellis to be $s_0 = 0$. The forward recursion variables are initialized as $\alpha_0(s) = 1$ if $s = 0$ and otherwise $\alpha_0(s) = 0$.
  \item {\bf Simulations for Sender~1:}
     Generate a Markov sequence ${\bf x}_1 = (x_{1,1}, x_{1,2},...,x_{1,n})$ according to the trellis $\mathcal{T}_1$ of source ${\bf x}_1$.

  \item {\bf Simulations for Sender~2:}
  Generate a Markov sequence ${\bf x}_2 = (x_{2,1}, x_{2,2},...,x_{2,n})$ according to the trellis $\mathcal{T}_2$ of source ${\bf x}_2$.

   \item {\bf Simulations for Receiver~1:}
   Generate the received sequence ${\bf y}_1$ according to the transition probability $W^n({\bf y}_1, {\bf y}_2 | {\bf x}_1, {\bf x}_2)$.
%

   \item {\bf Computations:}
      \begin{itemize}
    \item [a)]For $t = 1,2,...,n$ , compute the values of $P(y_{1,t}|y_{1}^{t-1})$ and $\alpha_t(s_t)$ recursively according to equations (\ref{eq:Py_t}) and (\ref{eq:alfa_t}).
  \item [b)] Evaluate the entropy rate
  $$ \lim_{n \rightarrow \infty}\frac{1}{n}H(Y_1^n) = -\frac{1}{n}\sum_{t = 1}^n\log P(y_{1,t}|y_{1}^{t-1}).$$
  \end{itemize}
\end{enumerate}
\end{algorithm}

Similarly, we can evaluate the entropy rate $\lim\limits_{n \rightarrow \infty}\frac{1}{n}H(Y_1^n | X_1^n)$. Therefore, we obtain the achievable rate
$\underline{I}({\bf X}_1;{\bf Y}_1) = \lim \limits_{n \rightarrow \infty}\frac{1}{n}I({X}_1^n;{Y}_1^n)$.
%

\subsection{Numerical Results}\label{GIFC_section}


\begin{figure}
  \centering
  \includegraphics[width=11.0cm]{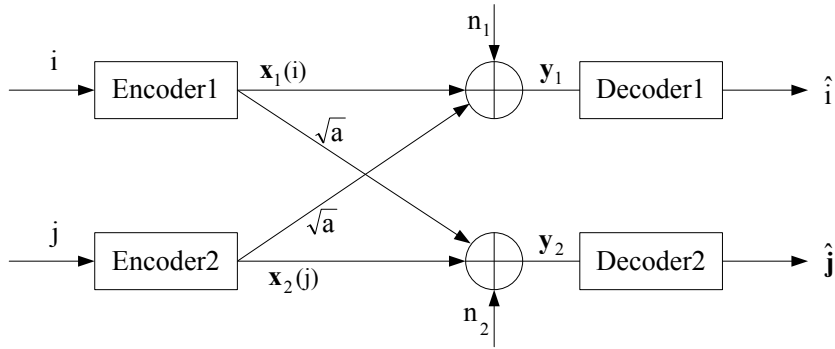}\\
  \caption{Symmetric Gaussian interference channel.}\label{GIFC}
\end{figure}
We consider the model as shown in Fig.~\ref{GIFC}, where the channel inputs
${\bf x}_1(i)$ and ${\bf x}_2(j)$ are BPSK sequences with power constraints $P_1$ and $P_2$, respectively; the additive noise ${\bf n}_1$ and ${\bf n}_2$ are sequences of independent and identically distributed~(i.i.d.) standard Gaussian random variables; the channel outputs ${\bf y}_1$ and ${\bf y}_2$ are
\begin{eqnarray}
{\bf y}_1 &=& {\bf x}_1(i) + \sqrt{a}{\bf x}_2(j) + {\bf n}_1,\\
{\bf y}_2 &=& {\bf x}_2(j) + \sqrt{a}{\bf x}_1(i) + {\bf n}_2.
\end{eqnarray}
We assume that ${\bf x}_1$ and
${\bf x}_2$ are the outputs from two~(possibly different) generalized trellis encoders driven by independent and uniformly distributed~(i.u.d.) input sequences, as proposed in~\cite{Huang2011}.
As examples, we consider two input processes. One is referred to as ``UnBPSK", standing for an i.u.d. BPSK sequence; the other is referred to as ``CcBPSK", standing for an output sequence from the convolutional encoder with the generator matrix
$
G(D) = [1 + D + D^2\,\,\,1+D^2]
$
driven by an i.u.d. input sequence.

Fig.~\ref{FigTRIFFC} shows the trellis representation of the signal model when Sender~1 uses CcBPSK and Sender~2 uses UnBPSK.
Fig.~\ref{UncodedVSConv_figure} shows the numerical results. There are three rectangles, OECH, ODBG and OFAI, each of which is determined by a pair of spectral inf-mutual
information rates. Specifically, the rectangle OECH corresponds to the case when both senders use UnBPSK as inputs; the rectangle ODBG corresponds to the case when Sender~1 uses UnBPSK as input and Sender~2 uses CcBPSK as input; and the rectangle OFAI corresponds to the case when Sender~1 uses CcBPSK as input and Sender~2 uses UnBPSK as input. The point ``A" can be achieved by a coding scheme, in which
Sender~1 uses a binary linear~(coset) code concatenated with the convolutional code and Sender~2 uses a binary linear code,
and the point ``B" can be achieved similarly; while the points on the line ``AB" can be achieved by time-sharing scheme.
The point ``C" represents the limits when the two senders use binary linear codes but regard the interference as an i.u.d. additive~(BPSK) noise. It can be seen that the area of the pentagonal region ODBAI is greater
than that of the rectangle OECH, which implies that knowing the structure of the interference can be used to improve potentially the bandwidth-efficiency.

\begin{figure}
  \centering
  \includegraphics[width=15.0cm]{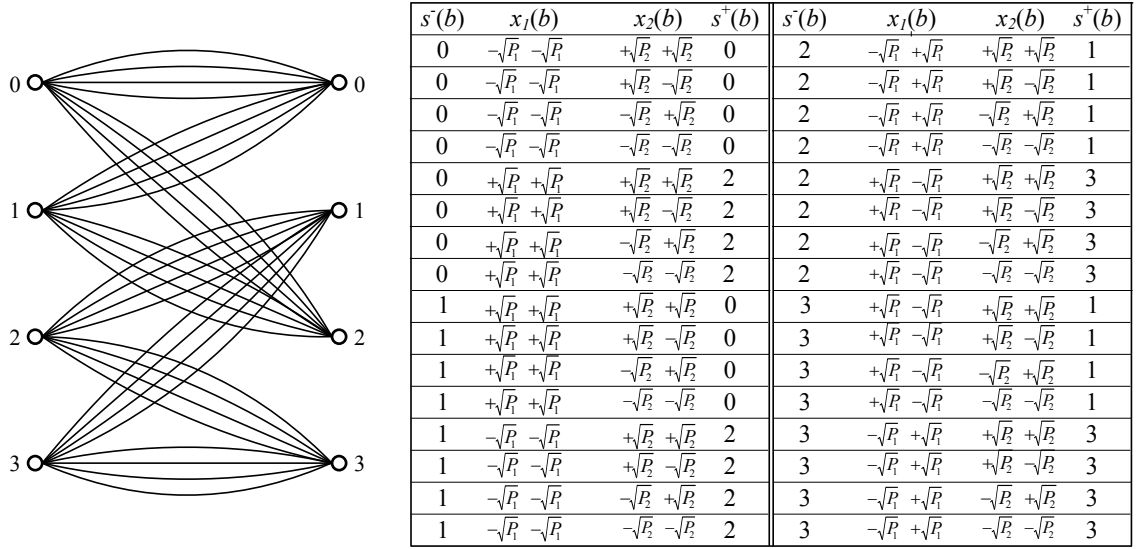}\\
  \caption{The trellis section of~(CcBPSK, UnBPSK) with 32 branches. For each branch $b$, $s^-(b)$ and $s^+(b)$ are the left state and the right state, respectively; while the associated symbols $x_1(b)$ and $x_2(b)$ are the transmitted signals at Sender~1 and Sender~2, respectively.}\label{FigTRIFFC}
\end{figure}


\begin{figure}
\centering
  \includegraphics[width=12.0cm]{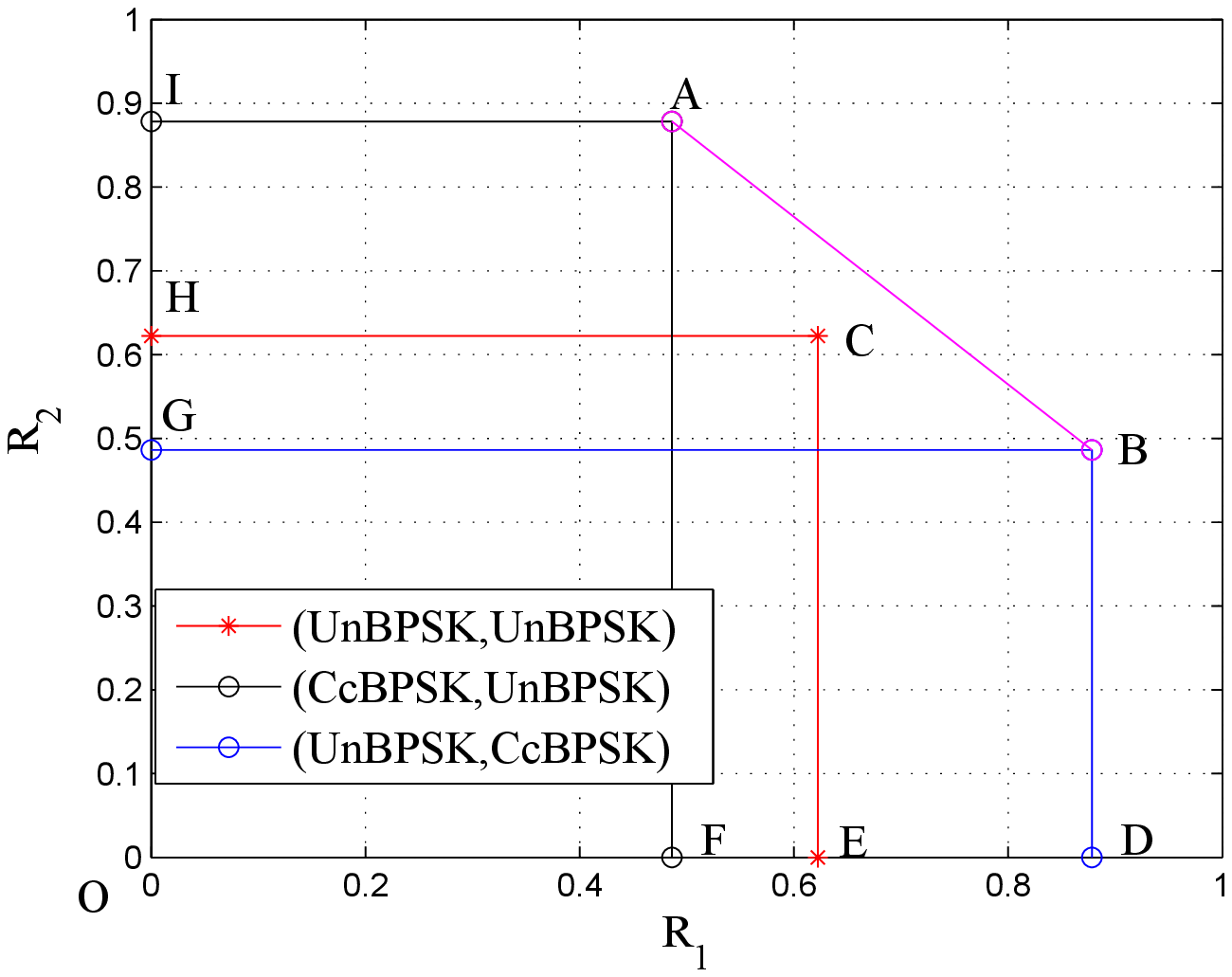}\\
  \caption{The evaluated achievable rate pairs of a specific GIFC, where $P_1=P_2=7.0~{\rm dB}$ and $a = 0.5$. The rectangle OECH with legend ``(UnBPSK, UnBPSK)" corresponds to the case when both senders use UnBPSK as inputs; the rectangle ODBG with legend ``(UnBPSK, CcBPSK)" corresponds to the case when Sender~1 uses UnBPSK as input and Sender~2 uses CcBPSK as input; and the rectangle OFAI with legend ``(CcBPSK, UnBPSK)" corresponds to the case when Sender~1 uses CcBPSK as input and Sender~2 uses UnBPSK as input. }\label{UncodedVSConv_figure}
\end{figure}

\section{Decoding Algorithms for Channels with Structured Interference}\label{Sec:decAnddetAlg}

The purpose of this section has two-folds. The first is to present a coding scheme to approach the point ``B" in Fig.~\ref{UncodedVSConv_figure}. The second is to show the decoding gain achieved by taking  into account the structure of the interference.
\subsection{A Coding Scheme}


We design a coding scheme using Kite codes\footnote{The main reason that we choose Kite codes is that it is convenient to set up the code rates.
Actually, given data length, the code rates of Kite codes can be ``continuously" varying from $0.1$ to $0.9$ with satisfactory performance, as shown
in~\cite{MaITW11}~\cite{ZhangISIT12}.}. Kite codes are a class of low-density parity-check~(LDPC)
codes, which can be decoded using the sum-product algorithm~(SPA)~\cite{Wiberg95,Kschischang01}. As shown in Fig.~\ref{fig:CodingScheme}, Sender~1 uses a Kite code~(with a parity-check matrix ${\bf H}_1$) and Sender~2 uses a Kite code~(with a parity-check matrix ${\bf H}_2$) concatenated with the convolutional code with the generator matrix
\begin{equation}\nonumber
G(D) = [1 + D + D^2\,\,\,1+D^2].
\end{equation}

\begin{figure}[!t]
  \centering
  \includegraphics[width=12.0cm]{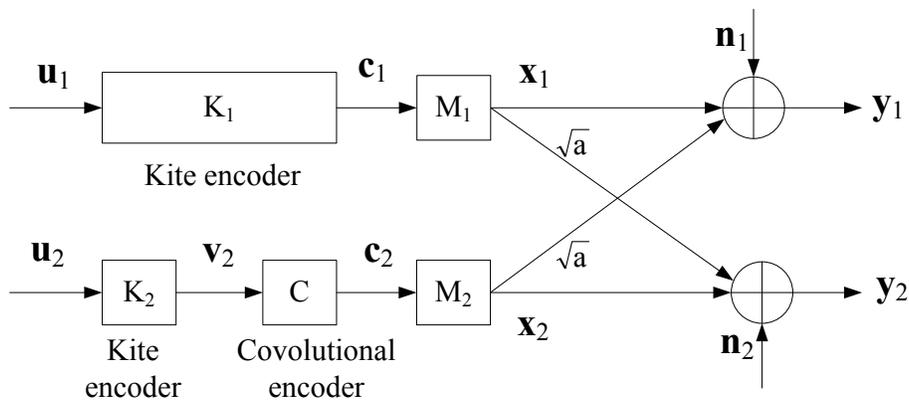}\\
  \caption{A coding scheme for the two-user GIFC.}\label{fig:CodingScheme}
\end{figure}

\emph{Encoding}: For Sender~1, a binary sequence ${\bf u}_1 = (u_{1,1}, u_{1,2},...,u_{1,L_1})$
of length $L_1$ is encoded by a Kite code into a coded sequence ${\bf c}_1 = (c_{1,1}, c_{1,2},...,c_{1,N})$ of length $N$.
For Sender~2, a binary sequence ${\bf u}_2 = (u_{2,1}, u_{2,2},...,u_{2,L_2})$ of length $L_2$
is firstly encoded by a Kite code into a sequence ${\bf v}_2 = (v_{2,1}, v_{2,2},...,v_{2,N'})$ of length $N'$
and then the sequence ${\bf v}_2$ is encoded by the convolutional code with the generator matrix $G(D)$ into
a coded sequence ${\bf c}_2 = (c_{2,1}, c_{2,2},...,c_{2,N})$ of length $N$.

\emph{Modulation}: The codewords ${\bf c}_k$ are mapped into the bipolar sequences
${\bf x}_k = (x_{k,1}, x_{k,2},...,x_{k,N})$ with $x_{k,i} = \sqrt{P_k}(1 - 2c_{k,i})$
($k = 1,2$), where $P_k$ is the power. Then we transmit ${\bf x}_k$ for $k = 1,2$ over the interference channel.

\emph{Decoding}: After receiving ${\bf y}_1$, Receiver~1 attempts to recover the transmitted message ${\bf u}_1$. Similarly,
after receiving ${\bf y}_2$, Receiver~2 attempts to recover the transmitted message ${\bf u}_2$. We will consider several decoding algorithms
in the next subsection to recover the transmitted messages.

\subsection{Decoding Algorithms}
In this subsection, depending on the knowledge about the interference, we present four decoding schemes, including ``knowing only the power of the interference", ``knowing the signaling of the
interference", ``knowing the CC" and ``knowing the whole structure". We focus on the decoding of Receiver~1, while the
decoding of Receiver~2 can be implemented similarly.\footnote{There is no decoding scheme ``Knowing
the CC" for User 2 because User 1 has no convolutional structure.} All these decoding algorithms will be described as\emph{ message processing/passing} algorithms over normal graphs~\cite{Forney01}.

%

\subsubsection{Message processing/passing algorithms over normal graphs}

\begin{figure}
  \centering
  \includegraphics[width=10.0cm]{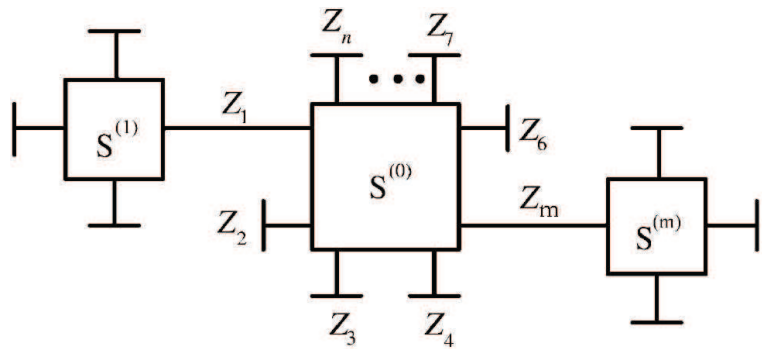}\\
  \caption{A normal graph of a general (sub)system.}
  \label{fig:GeneralNormalGraph}
\end{figure}
As shown in
Fig.~\ref{fig:GeneralNormalGraph}, a normal graph consists of edges and vertices, which
represent variables and subsystem constraints, respectively. Let
${\bf Z}=\left \lbrace Z_{1},Z_{2},\cdots,Z_{n} \right \rbrace$ be $n$ distinct
random variables that constitute a subsystem $S^{(0)}$. This subsystem
can be represented by a normal subgraph with edges representing ${\bf Z}$ and
a vertex $S^{(0)}$ representing the subsystem constraints. Each {\em half-edge} (ending with a dongle)
may potentially be coupled to some half-edge in other subsystems. For example,
$Z_1$ and $Z_m$ are shown to be connected to subsystems $S^{(1)}$ and $S^{(m)}$,
respectively. In this case, the corresponding edge is called a {\em full-edge}. Associated with each edge is a \emph{message} that is  defined in this paper as the pmf/pdf of the corresponding variable. As in~\cite{Ma12}, we use the notation $P_{Z_i}^{(S^{(i)} \rightarrow S^{(0)})}(z)$ to denote the message from $S^{(i)}$ to $S^{(0)}$.
In particular, we use the notation $P_{Z_i}^{(| \rightarrow S^{(0)})}(z)$ to represent the initial messages ``driving" the subsystem $S^{(0)}$.
For example, such initial messages can be the \emph{a priori} probabilities from the source or the \emph{a posteriori} probabilities computed from the channel observations.
Assume that all messages to $S^{(0)}$ are available.
The
vertex $S^{(0)}$, as a {\em message processor}, delivers the outgoing message
with respect to any given $Z_{i}$ by computing the likelihood function
\begin{equation}
P_{Z_{i}}^{(S^{(0)} \rightarrow S^{(i)})}(z) \propto
\mbox{Pr} \left \lbrace S^{(0)}
\mbox{ is satisfied } | Z_{i} = z \right \rbrace,
z \in \mathcal{Z}
\label{eq:message_processing}
\end{equation}
by considering all the available messages as well as the system constraints.
We claim that
$P_{Z_{i}}^{(S^{(0)} \rightarrow S^{(i)})}(z)$ is exactly the so-called
{\em extrinsic message} because the computation of the likelihood function is irrelevant to the incoming
message $P_{Z_{i}}^{(S^{(i)} \rightarrow S^{(0)})}(z)$.

\subsubsection{Knowing only the power of the interference}
\begin{figure}[!t]
  \centering
  \subfigure[]
  {
    \begin{minipage}[t]{0.45\textwidth}
    \centering
	\includegraphics[width=2.3in]{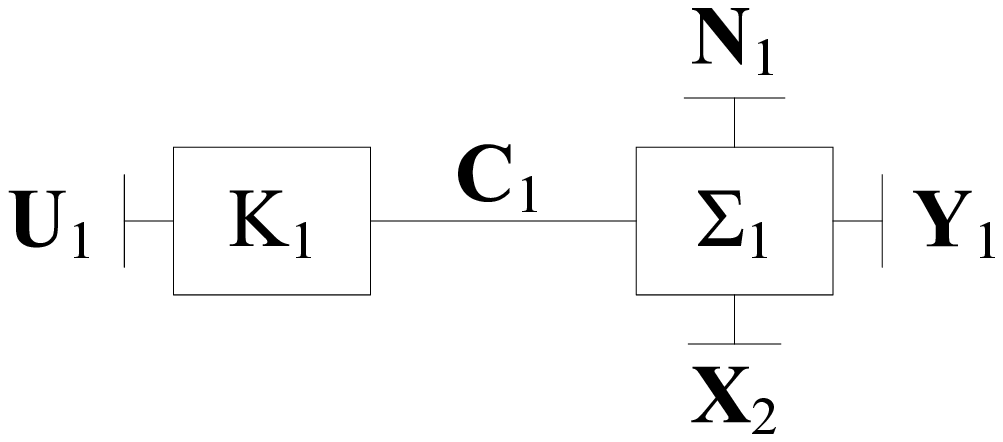}
	\label{fig:NormalGraphOfGB:user1}
	\end{minipage}
  }
\vspace{0.1in}
  \subfigure[]
  {
    \begin{minipage}[t]{0.45\textwidth}
    \centering
	\includegraphics[width=2.0in]{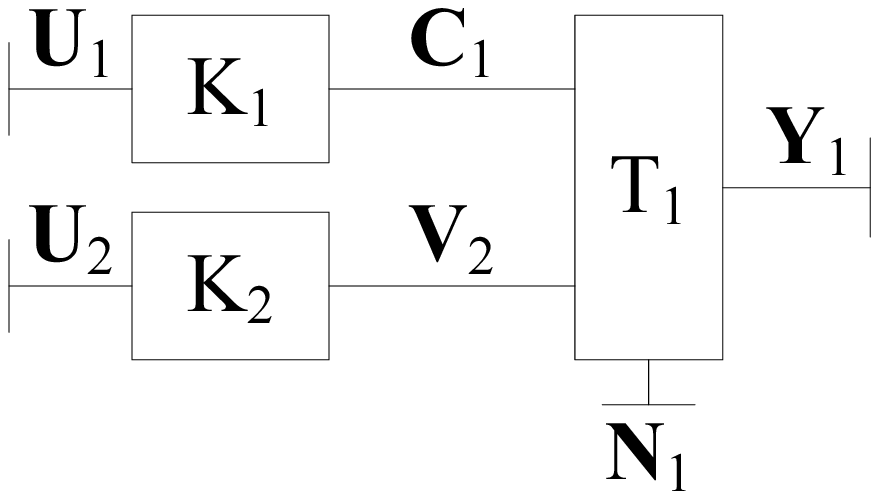}
	\label{fig:NormalGraphOfMAC:user1}
	\end{minipage}
  }

  \caption{The normal graphs: (a) stands for the normal graph of ``knowing only the power of the interference" and ``knowing the signaling of the interference" for Decoder~1; (b) stands for the normal graph of ``knowing the CC" and ``knowing the whole structure" for Decoder~1.}
  \label{fig:NormalGraphOfMAC}
\end{figure}
The decoding scheme for ``knowing only the power of the interference" is the simplest one, which can be described as a message processing/passing algorithm over the normal graph as shown in Fig.~\ref{fig:NormalGraphOfGB:user1}. In this scheme, the
interference from Sender~2 is treated as a Gaussian distribution with mean
zero and variance $aP_2$, where ``$P_2$" is the power and ``$a$" is the square of interference coefficient. That is, Receiver 1 assumes that
$X_{2,j} \sim \mathcal{N}(0, a P_2)$ for $j=1, 2, \cdots, N$.
Since $N_{1,j} \sim \mathcal{N}(0,1)$ for $j=1, 2, \cdots, N$, the decoding algorithm is initialized by the initial messages as follows
\begin{equation}
\begin{split}
P_{C_{1,j}}^{(\Sigma_1 \rightarrow K_1)}(c) = \mbox{Pr} \left \lbrace C_{1,j} = c | \mathbf{y_1},
X_{2,j} \sim \mathcal{N}(0, a P_2), j=1,2,\cdots,N \right \rbrace \\
\propto
\frac{1}{\sqrt{2 \pi (1+a P_2)}} \exp \left \{ - \frac{[y_{1,j}-\sqrt{P_1}(1-2c)]^2}{2(1+a P_2)} \right\}
, c \in \mathbb{F}_2
\end{split}
\label{eq:InitMsg_C}
\end{equation}
for $j=1, 2, \cdots, N$. Then the decoding algorithm uses SPA to compute iteratively the extrinsic messages
$P_{U_{1,i}}^{(K_1 \rightarrow |)}$ and $P_{C_{1,j}}^{(K_1 \rightarrow \Sigma_1)}$. Once these are done, we make the following decisions:
\begin{equation}
\hat{u}_{1,i} =
\left \{
\begin{array}{ll}
0,~\mbox{if}~P_{U_{1,i}}^{(| \rightarrow K_1)}(0)P_{U_{1,i}}^{(K_1 \rightarrow |)}(0) >
P_{U_{1,i}}^{(| \rightarrow K_1)}(1)P_{U_{1,i}}^{(K_1 \rightarrow |)}(1), \\
1,~\mbox{otherwise}.
\label{eq:HardDecision_U}
\end{array}
\right.
\end{equation}

\begin{equation}
\hat{c}_{1,j} =
\left \{
\begin{array}{ll}
0,~\mbox{if}~P_{C_{1,j}}^{(\Sigma_1 \rightarrow K_1)}(0)P_{C_{1,j}}^{(K_1 \rightarrow \Sigma_1)}(0) >
P_{C_{1,j}}^{(\Sigma_1 \rightarrow K_1)}(1)P_{C_{1,j}}^{(K_1 \rightarrow \Sigma_1)}(1), \\
1,~\mbox{otherwise}.
\label{eq:HardDecision_C}
\end{array}
\right.
\end{equation}
for $i=1,2,\cdots,L_1$
and $j=1,2,\cdots,N$. The details about the decoding algorithm are shown as below.
\begin{algorithm}[``knowing only the power of the
interference"]\label{alg:gaussian}
\mbox{}\par
\begin{itemize}
    \item {\em Initialization:}
    	\begin{enumerate}
			\item {} Initialize $P_{U_{1,i}}^{(| \rightarrow K_1)}(u) = \frac{1}{2}$
			for $i=1,2,\cdots,L_1$ and $u \in \mathbb{F}_2$.
		
			\item {} Compute $P_{C_{1,j}}^{(\Sigma_1 \rightarrow K_1)}(c)$ for $j=1,2,\cdots,N$ and $c \in \mathbb{F}_2$ according to (\ref{eq:InitMsg_C}).
		
		    \item {} Set a maximum iteration number $J$ and iteration variable $j=1$.
		\end{enumerate}
    \item {\em Repeat while $j \leq J$:}
		\begin{enumerate}
    		\item {} Compute extrinsic messages $P_{U_{1,i}}^{(K_1 \rightarrow |)}$
    		and $P_{C_{1,j}}^{(K_1 \rightarrow \Sigma_1)}$ for
    		$i=1,2,\cdots,L_1$ and $j=1,2,\cdots,N$ using SPA.

	    	\item {} Make decisions according to
		    (\ref{eq:HardDecision_U}) and (\ref{eq:HardDecision_C}). Denote ${\bf \hat{u}}_1 = \left (\hat{u}_{1,1}, \hat{u}_{1,2}, \cdots, \hat{u}_{1,L_1} \right )$ and
${\bf \hat{c}}_1 = \left (\hat{c}_{1,1}, \hat{c}_{1,2}, \cdots, \hat{c}_{1,N} \right )$.

		    \item {} Compute the syndrome ${\bf S}_1 = {\bf \hat{c}}_1 \cdot {\bf H}_1^T$.
	    	If ${\bf S}_1=\mathbf{0}$, output ${\bf \hat{u}}_1$ and ${\bf \hat{c}}_1$ and exit the iteration.

	    	\item {} Set $j=j+1$. If ${\bf S}_1 \neq \mathbf{0}$ and $j>J$,
    		report a decoding failure.
		\end{enumerate}
	\item {\em End decoding}.
\end{itemize}
\end{algorithm}

\subsubsection{Knowing the signaling of the interference}
The decoding algorithm for this scheme is almost the same as
Algorithm~\ref{alg:gaussian}, see Fig.~\ref{fig:NormalGraphOfGB:user1}. The difference is that $X_{2,j} \sim \mathcal{B}(1/2)$~(Bernoulli-1/2 distribution\footnote{Strictly speaking, $X_{2,j}$ is a shift/scaling version of $\mathcal{B}(1/2)$.}) for $j=1,2,\cdots,N$. So the computation of
$P_{C_{1,j}}^{(\Sigma_1 \rightarrow K_1)}(c)$ is changed into
\begin{equation}
\begin{split}
P_{C_{1,j}}^{(\Sigma_1 \rightarrow K_1)}(c) = \mbox{Pr} \left \lbrace C_{1,j} = c | \mathbf{y_1},
X_{2,j} \sim \mathcal{B}(1/2), j=1,2,\cdots,N \right \rbrace \\
\propto
\frac{1}{2}\frac{1}{\sqrt{2 \pi}} \exp \left \lbrace - \frac{ \left[ y_{1,j}-\sqrt{P_1}(1-2c)-\sqrt{a P_2} \right]^2}{2} \right \rbrace\\
+
\frac{1}{2}\frac{1}{\sqrt{2 \pi}} \exp \left \lbrace - \frac{ \left[ y_{1,j}-\sqrt{P_1}(1-2c)+\sqrt{a P_2} \right]^2}{2} \right \rbrace,\\
c \in \mathbb{F}_2
\end{split}
\label{eq:InitMsg_C_BPSK}
\end{equation}
Then the decoding algorithm of ``knowing the signaling of the interference" can be shown as below.
\begin{algorithm}[``knowing the signaling of the
interference"]\label{alg:bpsk}
\mbox{}\par
\begin{itemize}
    \item {\em Initialization:}
    	\begin{enumerate}
			\item {} Initialize $P_{U_{1,i}}^{(| \rightarrow K_1)}(u) = \frac{1}{2}$
			for $i=1,2,\cdots,L_1$ and $u \in \mathbb{F}_2$.
		
			\item {} Compute $P_{C_{1,j}}^{(\Sigma_1 \rightarrow K_1)}(c)$	for $j=1,2,\cdots,N$ and $c \in \mathbb{F}_2$ according to (\ref{eq:InitMsg_C_BPSK}).
		
		    \item {} Set a maximum iteration number $J$ and iteration variable $j=1$.
		\end{enumerate}
    \item {\em Repeat while $j \leq J$:}
		\begin{enumerate}
    		\item {} Compute extrinsic messages $P_{U_{1,i}}^{(K_1 \rightarrow |)}$
    		and $P_{C_{1,j}}^{(K_1 \rightarrow \Sigma_1)}$ for
    		$i=1,2,\cdots,L_1$ and $j=1,2,\cdots,N$ using SPA.

	    	\item {}  Make decisions according to
		    (\ref{eq:HardDecision_U}) and (\ref{eq:HardDecision_C}), respectively.

		    \item {} Compute the syndrome ${\bf S}_1 = {\bf \hat{c}}_1 \cdot {\bf H}_1^T$.
	    	If ${\bf S}_1=\mathbf{0}$, output ${\bf \hat{u}}_1$ and ${\bf \hat{c}}_1$ and exit the iteration.

	    	\item {} Set $j=j+1$. If ${\bf S}_1 \neq \mathbf{0}$ and $j>J$,
    		report a decoding failure.
		\end{enumerate}
	\item {\em End decoding}.
\end{itemize}
\end{algorithm}

\subsubsection{Knowing the CC}
``Knowing the CC" means that Decoder~1 knows the structure of the convolutional code. This scheme can be described as a message processing/passing algorithm over the normal graph as shown in
Fig.~\ref{fig:NormalGraphOfMAC:user1}. Actually, the
vertex $T_1$ is a combination of three subsystems, convolutional
encoder, modulation and GIFC constraint, which can be specified by a trellis $\mathcal{T}$ with parallel branches~\cite{Huang2011}. Therefore, the BCJR
algorithm can be used to compute the extrinsic messages
$P_{C_{1,j}}^{(T_1 \rightarrow K_1)}(c)$ for $j=1,2,\cdots,N$ over the trellis $\mathcal{T}$.
Since the structure of Kite code for Sender~2 is unknown, the constraint of
vertex $K_2$ is inactive. In this case, the pmf of variable $V_{2,k}$ ($k=1,2,\cdots,N'$)
is assumed to be Bernoulli-1/2 distribution. There are two strategies to
implement the BCJR algorithm. One is called ``BCJR-once", in which the BCJR algorithm is performed only once. The other strategy is called ``BCJR-repeat", in which the BCJR
algorithm is performed more than once. In this scheme, the decoding decisions on $C_{1,j}$ are modified into
\begin{equation}
\hat{c}_{1,j} =
\left \{
\begin{array}{ll}
0,~\mbox{if}~P_{C_{1,j}}^{(T_1 \rightarrow K_1)}(0)P_{C_{1,j}}^{(K_1 \rightarrow T_1)}(0) >
P_{C_{1,j}}^{(T_1 \rightarrow K_1)}(1)P_{C_{1,j}}^{(K_1 \rightarrow T_1)}(1), \\
1,~\mbox{otherwise},
\label{eq:HardDecision_C_CC}
\end{array}
\right.
\end{equation}
for $j=1,2,\cdots,N$. These two decoding procedures are described in
Algorithm \ref{alg:bcjr1} and Algorithm \ref{alg:conv}, respectively.

\begin{algorithm}[BCJR-once]\label{alg:bcjr1}
\mbox{}\par
\begin{itemize}
    \item {\em Initialization:}
    	\begin{enumerate}
	    	\item {} Initialize pmf
		    $P_{C_{1,j}}^{(K_1 \rightarrow T_1)} \left( c \right) = \frac{1}{2}$ and
		    $P_{C_{2,j}}^{(| \rightarrow T_1)} \left( c \right) = \frac{1}{2}$
		    for $j=1,2,\cdots,N, c \in \mathbb{F}_2$ and
		    $P_{V_{2,k}}^{(| \rightarrow T_1)} \left( v \right) = \frac{1}{2}$
		    for $k=1,2,\cdots,N', v \in \mathbb{F}_2$.
	
		    \item {} Compute extrinsic messages
	    	$P_{C_{1,j}}^{(T_1 \rightarrow K_1)} \left( c \right)$ for $j=1,2,\cdots,N$, $c \in \mathbb{F}_2$ using BCJR algorithm over the parallel branch trellis $\mathcal{T}$.
	
	    	\item {} Set a maximum iteration number $J$ and iteration variable $j=1$.
		\end{enumerate}
	
    \item {\em Repeat while $j \leq J$:}
		\begin{enumerate}
    		\item {} Compute extrinsic messages $P_{U_{1,i}}^{(K_1 \rightarrow |)}$
    		and $P_{C_{1,j}}^{(K_1 \rightarrow T_1)}$ for
    		$i=1,2,\cdots,L_1$ and $j=1,2,\cdots,N$ using SPA.

	    	\item {} Make decisions according to
		    (\ref{eq:HardDecision_U}) and (\ref{eq:HardDecision_C_CC}).

		    \item {} Compute the syndrome ${\bf S}_1 = {\bf \hat{c}}_1 \cdot {\bf H}_1^T$.
	    	If ${\bf S}_1=\mathbf{0}$, output ${\bf \hat{u}}_1$ and ${\bf \hat{c}}_1$ and exit the iteration.

	    	\item {} Set $j=j+1$. If ${\bf S}_1 \neq \mathbf{0}$ and $j>J$,
    		report a decoding failure.
		\end{enumerate}
	
	\item {\em End Decoding}
\end{itemize}
\end{algorithm}

\begin{algorithm}[BCJR-repeat]\label{alg:conv}
\mbox{}\par
\begin{itemize}
    \item {\em Initialization:}
    	\begin{enumerate}
	    	\item {} Initialize pmf
		    $P_{C_{1,j}}^{(K_1 \rightarrow T_1)} \left( c \right) = \frac{1}{2}$ and
		    $P_{C_{2,j}}^{(| \rightarrow T_1)} \left( c \right) = \frac{1}{2}$
		    for $j=1,2,\cdots,N, c \in \mathbb{F}_2$ and
		    $P_{V_{2,k}}^{(| \rightarrow T_1)} \left( v \right) = \frac{1}{2}$
		    for $k=1,2,\cdots,N', v \in \mathbb{F}_2$.
	
	    	\item {} Set a maximum iteration number $J$ and iteration variable $j=1$.
		\end{enumerate}

    \item {\em Repeat while $j \leq J$:}
		\begin{enumerate}		
		    \item {} Compute extrinsic messages
	    	$P_{C_{1,j}}^{(T_1 \rightarrow K_1)} \left( c \right)$ for $j=1,2,\cdots,N$, $c \in \mathbb{F}_2$ using BCJR algorithm over the parallel branch
		    trellis $\mathcal{T}$.
		
    		\item {} Compute extrinsic messages $P_{U_{1,i}}^{(K_1 \rightarrow |)}$
    		and $P_{C_{1,j}}^{(K_1 \rightarrow T_1)}$ for
    		$i=1,2,\cdots,L_1$ and $j=1,2,\cdots,N$ using SPA.

	    	\item {} Make decisions according to
		    (\ref{eq:HardDecision_U}) and (\ref{eq:HardDecision_C_CC}).

		    \item {} Compute the syndrome ${\bf S}_1 = {\bf \hat{c}}_1 \cdot {\bf H}_1^T$.
	    	If ${\bf S}_1=\mathbf{0}$, output ${\bf \hat{u}}_1$ and ${\bf \hat{c}}_1$ and exit the iteration.

	    	\item {} Set $j=j+1$. If ${\bf S}_1 \neq \mathbf{0}$ and $j>J$,
    		report a decoding failure.
		\end{enumerate}
	
	\item {\em End Decoding}
\end{itemize}
\end{algorithm}

\subsubsection{Knowing the whole structure}

The scheme ``knowing the whole structure" for Receiver~1 can also be described as a message processing/passing algorithm over the normal graph shown in
Fig.~\ref{fig:NormalGraphOfMAC:user1}. Since knowing the whole structure of
the interference, Receiver~1 can decode iteratively utilizing the structure of both
users. Using the
BCJR algorithm, $P_{C_{1,j}}^{(T_1 \rightarrow K_1)} \left( c \right)$ and
$P_{V_{2,k}}^{(T_1 \rightarrow K_2)} \left( v \right)$ are computed simultaneously
over the parallel branch trellis $\mathcal{T}$. The iterative decoding algorithm is presented
in Algorithm~\ref{alg:mac}.

\begin{algorithm}[``knowing the whole structure"]\label{alg:mac}
\mbox{}\par
\begin{itemize}
    \item {\em Initialization:}
    	\begin{enumerate}
	    	\item {} Initialize pmf
		    $P_{C_{1,j}}^{(K_1 \rightarrow T_1)} \left( c \right) = \frac{1}{2}$ and
		    $P_{C_{2,j}}^{(| \rightarrow T_1)} \left( c \right) = \frac{1}{2}$
		    for $j=1,2,\cdots,N, c \in \mathbb{F}_2$ and
		    $P_{V_{2,k}}^{(K_2 \rightarrow T_1)} \left( v \right) = \frac{1}{2}$
		    for $k=1,2,\cdots,N', v \in \mathbb{F}_2$.
	
	    	\item {} Set a maximum iteration number $J$ and iteration variable $j=1$.
		\end{enumerate}

    \item {\em Repeat while $j \leq J$:}
		\begin{enumerate}		
		    \item {} Compute extrinsic messages
	    	$P_{C_{1,j}}^{(T_1 \rightarrow K_1)} \left( c \right)$ for $j=1,2,\cdots,N$,
	    	$c \in \mathbb{F}_2$ and
		    $P_{V_{2,k}}^{(T_1 \rightarrow K_2)} \left( v \right)$ for $k=1,2,\cdots,N'$,
		    $v \in \mathbb{F}_2$ using BCJR algorithm over the parallel branch trellis $\mathcal{T}$.
		
    		\item {} Compute extrinsic messages $P_{U_{1,i}}^{(K_1 \rightarrow |)}$
    		and $P_{C_{1,j}}^{(K_1 \rightarrow T_1)}$ for
    		$i=1,2,\cdots,L_1$ and $j=1,2,\cdots,N$ using SPA.
    		
    		\item {} Compute extrinsic messages
		    $P_{V_{2,k}}^{(K_2 \rightarrow T_1)} \left( v \right)$
		    for $k=1,2,\cdots,N', v \in \mathbb{F}_2$ using SPA.

	    	\item {} Make decisions according to
		    (\ref{eq:HardDecision_U}) and (\ref{eq:HardDecision_C_CC}).

		    \item {} Compute the syndrome ${\bf S}_1 = {\bf \hat{c}}_1 \cdot {\bf H}_1^T$.
	    	If ${\bf S}_1=\mathbf{0}$, output ${\bf \hat{u}}_1$ and ${\bf \hat{c}}_1$ and exit the iteration.

	    	\item {} Set $j=j+1$. If ${\bf S}_1 \neq \mathbf{0}$ and $j>J$,
    		report a decoding failure.
		\end{enumerate}
	
	\item {\em End Decoding}
\end{itemize}
\end{algorithm}

\subsection{Numerical Results}
In this subsection, simulation results of the decoding algorithms are shown and analyzed.
Simulation parameters of Fig.~\ref{fig:User1performance} and Fig.~\ref{fig:User2performance} are presented in TABLE \ref{tab:simulation_parameter}. In these two
figures, we
let the power constraints of two senders be same, that is, $P_1 \equiv P_2 = P$. Here, ``Gaussian" stands for the scheme ``knowing only the power of the
interference", ``BPSK" stands for the scheme ``knowing the signaling of the interference", ``BCJR1" stands for the scheme ``BCJR-once", ``CONV" stands for the scheme ``BCJR-repeat"
and ``Know All Structure" stands for the scheme ``knowing the whole structure". From Fig.~\ref{fig:User1performance} and
Fig.~\ref{fig:User2performance}, we can easily see that the decoding gains get larger as
more details of the structure of the interference are known. 

\begin{center}
\begin{threeparttable}[!t]
\caption{Parameters of the BER performance simulations}
\label{tab:simulation_parameter}
\centering
\begin{tabular}{c|ccccc}
\hline
\hline
Parameters                  & Values \\
\hline
Square of interference coefficient $a$ & 0.5 \\
Maximum iteration number $J$  & 200 \\
Kite Code of Sender~1   & $N=10000,L_1=8782$ \\
Kite Code of Sender~2   & $N'=5000,L_2=4862$ \\
Generator matrix $G(D)$ & $[1 + D + D^2\,\,\,1+D^2]$ \\
Code rate pair $\left( R_1,R_2 \right)$ & $\left( 0.8782,0.4862 \right)$ \\
\hline
\hline
\end{tabular}
\end{threeparttable}
\end{center}

\begin{figure}[!t]
  \centering
  \includegraphics[width=12.0cm]{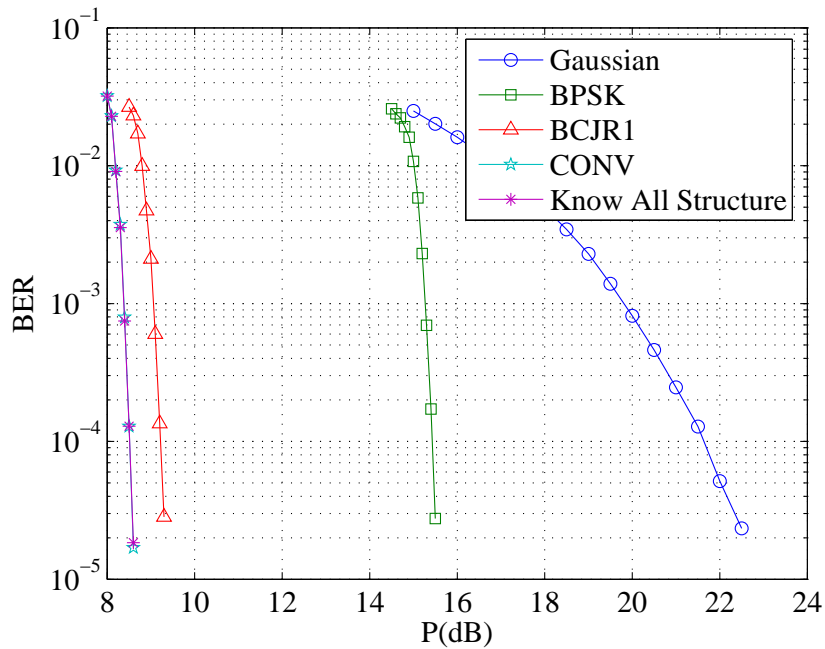}\\
  \caption{The error performance of Receiver~1.  ``Gaussian" stands for the scheme ``knowing only the power of the
interference", ``BPSK" stands for the scheme ``knowing the signaling of the interference", ``BCJR1" stands for the scheme ``BCJR-once", ``CONV" stands for the scheme ``BCJR-repeat"
and ``Know All Structure" stands for the scheme ``knowing the whole structure".}
  \label{fig:User1performance}
\end{figure}

\begin{figure}[!t]
  \centering
  \includegraphics[width=12.0cm]{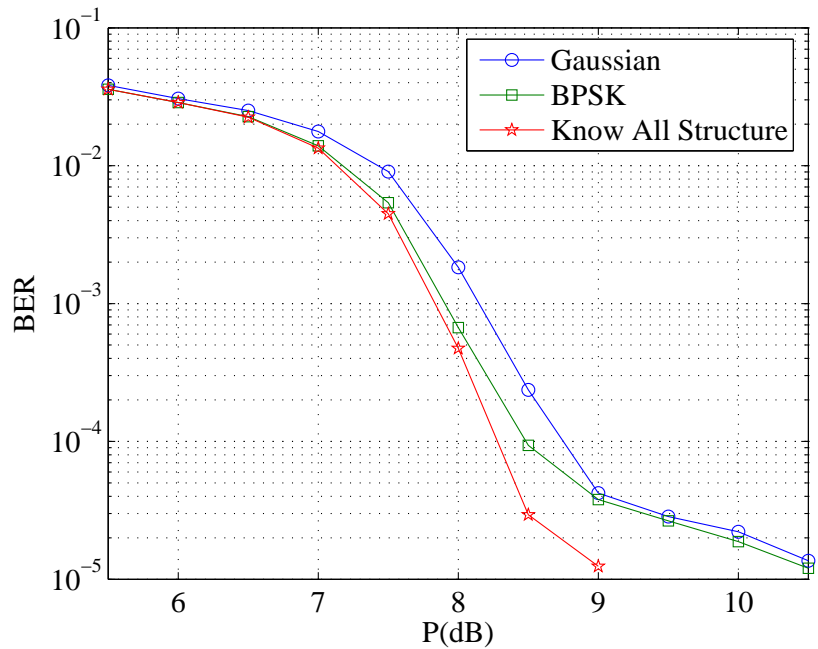}\\
  \caption{The error performance of Receiver~2. ``Gaussian" stands for the scheme ``knowing only the power of the
interference", ``BPSK" stands for the scheme ``knowing the signaling of the interference"
and ``Know All Structure" stands for the scheme ``knowing the whole structure".}
  \label{fig:User2performance}
\end{figure}

Another objective is to find out a code rate pair nearest to the point ``B" in Fig.~\ref{UncodedVSConv_figure} with
bit error rate~(BER) performance of $10^{-4}$. So we do the simulations with different code rate pairs. In
the simulations, we adopt the scheme ``knowing the whole structure" and gradually decrease the code rates from the point ``B" with a
step length $0.01$. Simulation parameters for different code rate pairs are
listed in TABLE \ref{tab:simulation_paramete_coderate}, while the simulation
results are presented using a 3D graph in Fig.~\ref{fig:3dgraph}. From the
figure, it is obvious that as the code rates of two users are decreasing, the
BER also decreases. Finally, we find out the ``best" code rate pair is
$(0.71, 0.48)$ for User~1 and User~2. The theoretical value of the point ``B" is about $(0.878,0.486)$. So we can see that the gap between the result using our decoding scheme and the theoretical value is small.

\begin{center}
\begin{threeparttable}[!t]
\caption{Parameters of the simulations for different code rate pairs.}
\label{tab:simulation_paramete_coderate}
\centering
\begin{tabular}{c|ccccc}
\hline
\hline
Parameters                    & Values \\
\hline
Square of interference coefficient $a$ & $0.5$ \\
Maximum iteration number $J$        & $200$ \\
Code length $N$ of Kite Code of Sender~1  & $10000$ \\
Code length $N'$ of Kite Code of Sender~2  & $5000$ \\
Generator matrix $G(D)$  & $[1 + D + D^2\,\,\,1+D^2]$ \\
Step length            & $100$ \\
Range of message length $L_1$ & $7100 \sim 8800$ \\
Range of message length $L_2$ & $4000 \sim 4900$ \\
\hline
\hline
\end{tabular}
\end{threeparttable}
\end{center}

\begin{figure}[!t]
  \centering
  \includegraphics[width=12.0cm]{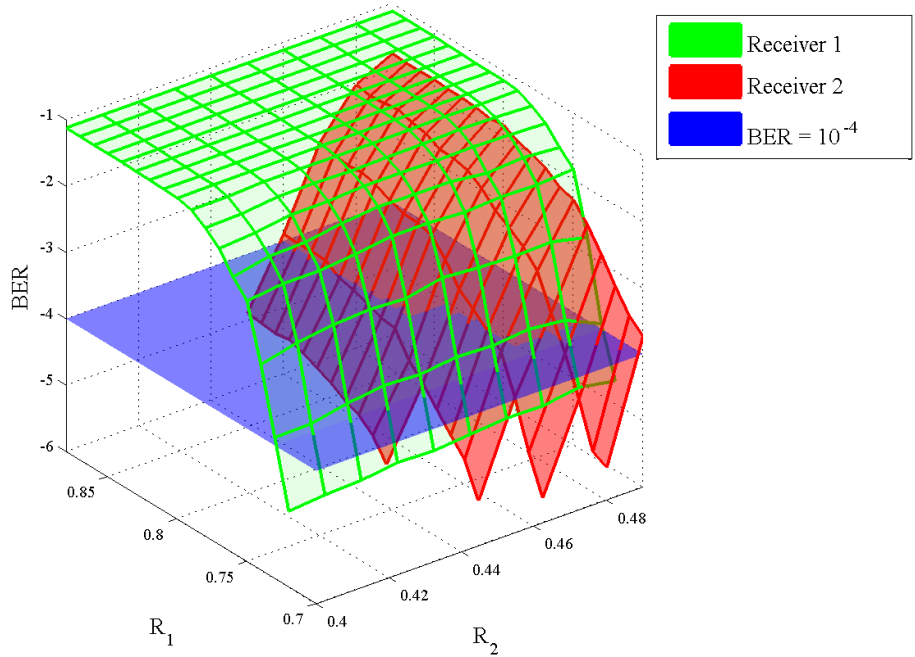}\\
  \caption{Error performance of two users with different code rate pairs $(R_1,R_2)$. Blue plane represents BER level, green surface stands for the error performance of Receiver~1 and red surface stands for the error performance of Receiver~2.}
  \label{fig:3dgraph}
\end{figure}

\section{Conclusions}\label{Conclusions}
In this paper, we have proved that the capacity region of the two-user interference channel is the union of a family of rectangles, each of which is defined by a pair of spectral inf-mutual information rates associated with two independent input processes. For the stationary memoryless channel with discrete Markov inputs, the defined pair of rates can be computed, which show us that the simplest inner bounds~(obtained by treating the interference as noise) could be improved by taking into account the structure of the interference processes. Also a concrete coding scheme to approach the theoretical achievable rate pairs was presented, which showed that the decoding gain can be achieved by considering the structure of the interference.

\vspace{-0.25cm}
\bibliographystyle{IEEEtran}
\bibliography{IEEEabrv,mybibfile}

\begin{thebibliography}{10}
\providecommand{\url}[1]{#1}
\csname url@samestyle\endcsname
\providecommand{\newblock}{\relax}
\providecommand{\bibinfo}[2]{#2}
\providecommand{\BIBentrySTDinterwordspacing}{\spaceskip=0pt\relax}
\providecommand{\BIBentryALTinterwordstretchfactor}{4}
\providecommand{\BIBentryALTinterwordspacing}{\spaceskip=\fontdimen2\font plus
\BIBentryALTinterwordstretchfactor\fontdimen3\font minus
  \fontdimen4\font\relax}
\providecommand{\BIBforeignlanguage}[2]{{%
\expandafter\ifx\csname l@#1\endcsname\relax
\typeout{** WARNING: IEEEtran.bst: No hyphenation pattern has been}%
\typeout{** loaded for the language `#1'. Using the pattern for}%
\typeout{** the default language instead.}%
\else
\language=\csname l@#1\endcsname
\fi
#2}}
\providecommand{\BIBdecl}{\relax}
\BIBdecl

\bibitem{Shannon61}
C.~E. Shannon, ``Two-way communication channels,'' in \emph{Forth Berkeley
  Symp. on Math. Statist. and Prob.}, J.~Neyman, Ed., vol.~1.\hskip 1em plus
  0.5em minus 0.4em\relax Statist. Lab. of the University of California,
  Berkely: University of California Press, Jun. 20 - Jul. 30 1961, pp.
  611--644.

\bibitem{Ahlswede74}
R.~Ahlswede, ``The capacity region of a channel with two senders and two
  receivers,'' \emph{The Annals of Probability}, vol.~2, no.~5, pp. 805--814,
  Oct. 1974.

\bibitem{Carleial75}
A.~B. Carleial, ``A case where interference does not reduce capacity,''
  \emph{IEEE Trans. Inform. Theory}, vol.~21, no.~5, pp. 569--570, 1975.

\bibitem{Han81}
T.~S. Han and K.~Kobayashi, ``A new achievable rate region for the interference
  channel,'' \emph{IEEE Trans. Inform. Theory}, vol. IT-27, no.~1, pp. 49--60,
  Jan. 1981.

\bibitem{ElGamal82}
A.~A. El~Gamal and M.~H.~M. Costa, ``The capacity region of a class of
  deterministic interference channels,'' \emph{IEEE Trans. Inform. Theory},
  vol. IT-28, no.~2, pp. 343--346, Mar. 1982.

\bibitem{Costa87}
M.~H.~M. Costa and A.~A. El~Gamal, ``The capacity region of the discrete
  memoryless interference channel with strong interference.'' \emph{IEEE Trans.
  Inform. Theory}, vol. IT-33, no.~5, pp. 710--711, Sep. 1987.

\bibitem{Kramer04}
G.~Kramer, ``Outer bounds on the capacity of {G}aussian interference
  channels,'' \emph{IEEE Trans. Inform. Theory}, vol.~50, no.~3, pp. 581--586,
  Mar. 2004.

\bibitem{Sato77}
H.~Sato, ``Two-user communication channels,'' \emph{IEEE Trans. Inform.
  Theory}, vol. IT-23, no.~3, pp. 295--304, May 1977.

\bibitem{Carleial83}
A.~B. Carleial, ``Outer bounds on the capacity of interference channels,''
  \emph{IEEE Trans. Inform. Theory}, vol. IT-29, no.~4, pp. 602--606, Jul.
  1983.

\bibitem{Sato78}
H.~Sato, ``On degraded {G}aussian two-user channels,'' \emph{IEEE Trans.
  Inform. Theory}, vol. IT-24, no.~5, pp. 637--640, Sep. 1978.

\bibitem{Costa85}
M.~H.~M. Costa, ``On the {G}aussian interference channel,'' \emph{IEEE Trans.
  Inform. Theory}, vol. IT-31, no.~5, pp. 607--615, Sep. 1985.

\bibitem{Chong08}
H.-F. Chong, M.~Motani, H.~K. Garg, and H.~E. Gamal, ``On the {H}an-{K}obayashi
  region for the interference channel,'' \emph{IEEE Trans. Inform. Theory},
  vol.~54, no.~7, pp. 3188--3195, Jul. 2008.

\bibitem{Kramer06}
G.~Kramer, ``Review of rate regions for interference channels,'' in \emph{Int.
  Zurich Seminar on Communications~(IZS)}, Zurich, Feb. 22 - 24 2006, pp.
  162--165.

\bibitem{Etkin08}
R.~H. Etkin, D.~N.~C. Tse, and H.~Wang, ``Gaussian interference channel
  capacity to within one bit,'' \emph{IEEE Trans. Inform. Theory}, vol.~54,
  no.~12, pp. 5534--5562, Dec. 2008.

\bibitem{Huang2011}
X.~Huang, X.~Ma, L.~Lin, and B.~Bai, ``Accessible capacity of secondary users
  over the gaussian interference channel,'' in \emph{IEEE International
  Symposium on Information Theory,}, St. Petersburg, Russia, Aug. 2011, pp.
  811--815.

\bibitem{FrancoisArticle2011}
F.~Baccelli, A.~A. El~Gamal, and D.~Tse, ``Interference networks with
  point-to-point codes,'' \emph{IEEE Trans. Inform. theory}, vol.~57, no.~5,
  pp. 2582--2596, May 2011.

\bibitem{Moshksar11}
K.~Moshksar, A.~Ghasemi, and A.~K. Khandani, ``An alternative to decoding
  interference or treating interference as gaussian noise,'' in \emph{IEEE
  International Symposium on Information Theory,}, St. Petersburg, Russia, Aug.
  2011, pp. 1176--1180.

\bibitem{Han93}
T.~S. Han and S.~Verd\'{u}, ``Approximation theory of output statistics,''
  \emph{IEEE Trans. Inform. Theory}, vol.~39, no.~3, pp. 752--772, May 1993.

\bibitem{Han03}
T.~S. Han, \emph{Information-spectrum methods in information theory}.\hskip 1em
  plus 0.5em minus 0.4em\relax New York: Springer-Verlag Berlin Heidelberg,
  2003.

\bibitem{Verdu94}
S.~Verd\'{u} and T.~S. Han, ``A general formula for channel capacity,''
  \emph{IEEE Trans. Inform. Theory}, vol.~40, no.~4, pp. 1147--1157, Jul. 1994.

\bibitem{Han98}
T.~S. Han, ``An information-spectrum approach to capacity theorems for the
  general multiple-access channel,'' \emph{IEEE Trans. Inform. Theory},
  vol.~44, no.~7, pp. 2773--2795, Nov. 1998.

\bibitem{Etkin10}
R.~H. Etkin, N.~Merhav, and E.~Ordentlich, ``Error exponents of optimum
  decoding for the interference channel,'' \emph{IEEE Trans. Inform. Theory},
  vol.~56, no.~1, pp. 40--56, Jan. 2010.

\bibitem{Kavcic01}
A.~Kav\v{c}i\'{c}, ``On the capacity of {M}arkov sources over noisy channels,''
  in \emph{Proc. IEEE GLOBECOM'01}, vol.~5, San Antonio, TX, USA, Nov. 25-29
  2001, pp. 2997--3001.

\bibitem{Arnold01}
D.~M. Arnold and H.-A. Loeliger, ``On the information rate of binary-input
  channels with memory,'' in \emph{Proc. 2001 IEEE Int. Conf. Commun.}, vol.~9,
  Helsinki, Finland, Jun. 2001, pp. 2692--2695.

\bibitem{Pfister01}
H.~D. Pfister, J.~B. Soriaga, and P.~H. Siegel, ``On the achievable information
  rates of finite state {ISI} channels,'' in \emph{Proc. IEEE GLOBECOM'01},
  vol.~5, San Antonio, Texas, Nov. 25-29 2001, pp. 2992--2996.

\bibitem{BCJR74}
L.~R. Bahl, J.~Cocke, F.~Jelinek, and J.~Raviv, ``Optimal decoding of linear
  codes for minimizing symbol error rate,'' \emph{IEEE Trans. Inform. Theory},
  vol. IT-20, no.~2, pp. 284--287, Mar. 1974.

\bibitem{Cover91}
T.~M. Cover and J.~A. Thomas, \emph{Elements of Information Theory}.\hskip 1em
  plus 0.5em minus 0.4em\relax New York: John Wiley and Sons, Inc, 1991.

\bibitem{MaITW11}
X.~Ma, S.~Zhao, K.~Zhang, and B.~Bai, ``Kite codes over groups,'' in \emph{IEEE
  Information Theory Workshop}, Paraty, Brazil, 2011, pp. 481--485.

\bibitem{ZhangISIT12}
K.~Zhang, X.~Ma, S.~Zhao, B.~Bai, and X.~Zhang, ``A new ensemble of
  rate-compatible {LDPC} codes,'' in \emph{IEEE International Symposium on
  Information Theory}, Cambridge, MA, USA, July 2012.

\bibitem{Wiberg95}
N.Wiberg, H.-A. Loeliger, and R.~K\"{o}tter, ``Codes and iterative decoding on
  general graphs,'' \emph{Europ. Trans. Commun.}, vol.~6, pp. 513 -- 526, Sep.
  1995.

\bibitem{Kschischang01}
F.~R. Kschischang, B.~J. Frey, and H.~A. Loeliger, ``Factor graphs and the
  sum-product algorithm,'' \emph{IEEE Trans. Inform. Theory}, vol.~47, no.~2,
  pp. 498 -- 519, Feb. 2001.

\bibitem{Forney01}
G.~D. Forney, Jr., ``Codes on graphs: Normal realizations,'' \emph{IEEE Trans.
  Inform. Theory}, vol.~47, no.~2, pp. 520--548, Feb. 2001.

\bibitem{Ma12}
X.~Ma, K.~Zhang, H.~Chen, and B.~Bai, ``Low complexity {X-EMS} algorithms for
  nonbinary {LDPC} codes,'' \emph{IEEE Trans. Commun.}, vol.~60, no.~1, pp.
  9--13, Jan. 2012.

\end{thebibliography}


\end{document}